\newcommand{\Rmnum}[1]{\expandafter\@slowromancap\romannumeral #1@}
\newcommand{\tr}{{\rm Tr}}
\newtheorem{theorem}{Theorem}
\newtheorem{lemma}{Lemma}
\newcommand{\appref}[1]{\hyperref[#1]{{Appendix~\ref*{#1}}}}
\newcommand{\be}{\begin{eqnarray} \begin{aligned}}
\newcommand{\ee}{\end{aligned} \end{eqnarray} }
\newcommand{\benn}{\begin{eqnarray*} \begin{aligned}}
\newcommand{\eenn}{\end{aligned} \end{eqnarray*}}
\begin{document}
\title{Quantification of Entanglement and Coherence with Purity Detection}
\author{Ting Zhang}
\affiliation{School of Physics, State Key Laboratory of Crystal Materials, Shandong University, Jinan 250100, China}
\author{Graeme Smith}
\affiliation{JILA, University of Colorado/NIST, 440 UCB, Boulder, CO 80309, USA}
\affiliation{Center for Theory of Quantum Matter, University of Colorado, Boulder, Colorado 80309, USA}
\affiliation{Department of Physics, University of Colorado, 390 UCB, Boulder, CO 80309, USA}
\author{John A. Smolin}
\affiliation{IBM T.J. Watson Research Center, 1101  Kitchawan Road, Yorktown Heights, NY 10598}
\author{Lu Liu}
\affiliation{School of Physics, State Key Laboratory of Crystal Materials, Shandong University, Jinan 250100, China}
\author{Xu-Jie Peng}
\affiliation{School of Physics, State Key Laboratory of Crystal Materials, Shandong University, Jinan 250100, China}
\author{Qi Zhao}
\affiliation{QICI Quantum Information and Computation Initiative, Department of Computer Science,
The University of Hong Kong, Pokfulam Road, Hong Kong}
\author{Davide Girolami}
\affiliation{DISAT, Politecnico di Torino, Corso Duca degli Abruzzi 24, Torino 10129, Italy}
\author{Xiongfeng Ma}
\affiliation{Center for Quantum Information, Institute for Interdisciplinary Information Sciences, Tsinghua University, Beijing, 100084 China}
\author{Xiao Yuan}
\email{xiaoyuan@pku.edu.cn}
\affiliation{Center on Frontiers of Computing Studies, Peking University, Beijing 100871, China}
\affiliation{School of Computer Science, Peking University, Beijing 100871, China}

\author{He Lu}
\email{luhe@sdu.edu.cn}
\affiliation{School of Physics, State Key Laboratory of Crystal Materials, Shandong University, Jinan 250100, China}
\affiliation{Shenzhen Research Institute of Shandong University, Shenzhen 518057, China}

\begin{abstract}
Entanglement and coherence are fundamental properties of quantum systems, promising to power near future quantum technologies, such as quantum computation, quantum communication and quantum metrology. Yet, their quantification, rather than mere detection, generally requires reconstructing the spectrum of quantum states, i.e., experimentally challenging measurement sets that increase exponentially with the system size. Here, we demonstrate quantitative bounds to operationally useful entanglement and coherence that are universally valid, analytically computable, and experimentally friendly. Specifically, our main theoretical results are lower and upper bounds to the coherent information and the relative entropy of coherence in terms of local and global purities of quantum states. To validate our proposal, we experimentally implement two purity detection methods in an optical system: shadow estimation with random measurements and collective measurements on pairs of state copies. The experiment shows that both the coherent information and the relative entropy of coherence of pure and mixed unknown quantum states can be bounded by purity functions. Our research offers an efficient means of verifying large-scale quantum information processing. 
\end{abstract}

\maketitle
\section*{Introduction}
Entanglement is a fundamental trait of many-body quantum systems and a key resource for quantum information processing~\cite{nielsen2011quantum, dense,cryptography91, Teleporting93, Horodecki09}. Recently, theoretical methods to characterize quantum superpositions have been generalized to evaluate quantum coherence in single systems~\cite{Colloquium17} and explore its uses for quantum technologies~\cite{giovannetti2011advances, Demonstrating19, Catalytic14,lostaglio2015description,narasimhachar2015low,romero2014quantum,huelga2013vibrations,lloyd2011quantum,lambert2013quantum}. Quantification of such resources provides insights on the true computational power of quantum devices~\cite{vidal2002computable,hayden2001asymptotic, entdet,BDSW,wootters}, and many important measures are defined in terms of the von Neumann entropy $S(\rho)=-\tr(\rho\log\rho)$. Besides, the von Neumann entropy has been found widespread applications in quantum date compression~\cite{Schumacher95coding}, quantum thermodynamics~\cite{Brand13Thermal}, capacity bounds for quantum channels~\cite{Junaid2022PRA} and many-body physics, from the characterization of topological matter~\cite{kitaev06Topological,Furukawa07dimer,satzinger2021realizing}, to dynamics out of equilibrium~\cite{alba2013entanglement}, to the understanding of tensor network methods~\cite{gerster2016superfluid}~(see Ref.~\cite{laflorencie2016quantum} for a review). However, the quantification of von Neumann entropy is hard both theoretically and experimentally, as it necessitates knowledge of the full spectrum of the system state $\rho$. Clever methods that enable to \emph{witness} entanglement and coherence employ randomized measurements~\cite{brydges2019probing, PhysRevA.99.052323,elbenrandomized,elben2020mixed} and collective detections on many copies of quantum states to extract spectrum polynomials, e.g., the state purity $\tr(\rho^2)$ ~\cite{Massar95,Tarrach99,islam2015measuring, Bagan06,wu2021experimental,zhangDetecting2017}. Yet, these protocols cannot be easily applied to {\it quantify} entanglement and coherence: there are not measures of quantum resources which can be expressed in terms of directly observable (polynomial) quantities.

In this letter, we address this challenge by proposing an efficient approach to identify quantitative bounds to entanglement and coherence of unknown quantum states in terms of purity functions, in contrast to other protocols based on local measurements~\cite{Mandal2020PRR,huang2022measuring}. We focus on the \emph{coherent information} and \emph{relative entropy of coherence}, which are both defined in terms of the von Neumann entropy and are information measures with compelling operational interpretations. The coherent information is related to the distillable entanglement and the capacity of quantum channels with applications in  quantum communication, one-way entanglement distillation, quantum state merging and quantum many-body physics~\cite{Schumacher96, Capacity97,schum, Lloyd, Devetak,horodecki2005partial,uncertainty21,friis2019entanglement,brydges2019probing}. The relative entropy of coherence lower bounds the distillable coherence and plays an important role in quantum thermodynamics, quantum metrology, quantum computing, quantum random number generation and quantum phase transitions~\cite{ResourceCoherence16, zhao2019one,yang,cohrev}. We prove analytical upper and lower bounds on the coherent information and relative entropy of coherence of arbitrary finite dimensional quantum states in terms of their local and global purities, which are measurable without spectrum reconstruction~\cite{multipartiteding}. Then, we experimentally demonstrate our proposal in an optical system by implementing the randomized measurements scheme on four-qubit states, and collective measurements on two copies of two-qubit states. The experiment results confirm that operationally useful entanglement and coherence of unknown quantum states can be quantified without spectrum reconstruction. 

\section*{Results}
Our study has two main merits. First, it discovers simple analytical functions that {\it quantify}, rather than only {\it witness}, key quantum resources in arbitrary systems of finite dimension. Second, it shows an experimental comparison between the well-established interference-based method for non-tomographic exploration of quantum properties~\cite{REN2017281, Cincio_2018,hou2018deterministic,wu2019experimentally}, and the recently introduced ``shadow estimation" techniques~\cite{huang2020predicting,18M120275X, Shadows21,DaleyGrowth12}. Together, our study provides a theoretically universal and practically efficient means to benchmark  features of unknown quantum systems.

\subsection{Quantification of coherent information.}
For quantum states $\rho_{AB}\in \mathcal{H}_{d_A}\otimes \mathcal{H}_{d_B}$, the coherent information is defined by 
\begin{equation}
I(A\rangle B)=S(\rho_B)-S(\rho_{AB}),
\end{equation}
where $A$ and $B$ are subsystems and $\rho_B=\tr_{A}(\rho_{AB})$ is the reduced density matrix on subsystem $B$. A positive value of $I(A\rangle B)$ signals operationally useful entanglement between subsystems $A$ and $B$~\cite{uncertainty21}. 

Measuring $I(A\rangle B)$ requires knowledge of the eigenvalues of the density matrices. We propose a method to obtain upper and lower bounds on the von Neumann entropy in terms of the global and marginal purity of the state. Given the spectral decomposition of a $d$-dimensional quantum state, i.e., $\rho = \sum_{i=1}^d\lambda_{i,\rho}\ket{\psi_i}\bra{\psi_i}, \sum_i \lambda_{i,\rho}=1, \langle\psi_i|\psi_j\rangle=\delta_{ij},  \lambda_{1,\rho}\ge \lambda_{2,\rho} \ge \dots \lambda_{d,\rho}$,  we determine the extreme values of the state entropy $S(\rho)= -\sum_{i=1}^d\lambda_{i,\rho}\log \lambda_{i,\rho}$ at fixed purity $ \mathcal P(\rho):=\sum_{i=1}^d\lambda_{i,\rho}^2$, where the logarithm is written in base 2~\cite{zyczkowski2003renyi}.
The spectrum $\{\lambda_{i,\rho}^{\text{M}}\}$ that maximizes $S(\rho)$ is 
 $\lambda_{1,\rho}^{\text{M}} = \frac{1}{d}+\sqrt{\frac{d-1}{d}\left(\mathcal P(\rho)-\frac{1}{d}\right)},
  \lambda_{2,\rho}^{\text{M}} =\dots=\lambda_{d,\rho}^{\text{M}} = \frac{1-\lambda_{1,\rho}^{\text{M}}}{d-1}.$
The spectrum $\{\lambda_{i,\rho}^{\text{m}}\}$ that minimizes $S(\rho)$ is given by
$
\lambda_{1,\rho}^{\text{m}}=\lambda^{\text{m}}_{2,\rho}=\dots=\lambda^{\text{m}}_{k_\rho-1,\rho}=\frac{1-\alpha_\rho}{k_\rho-1},
\lambda^{\text{m}}_{k_\rho,\rho} = \alpha_\rho,
\lambda^{\text{m}}_{k_\rho+1,\rho}=\dots=\lambda^{\text{m}}_{d,\rho} = 0
$,
where $\alpha_\rho={1}/{k_\rho} - \sqrt{ (1-1/k_\rho)(\mathcal P(\rho)-1/k_\rho)}$ and $k_\rho$ is the integer such that $ \frac{1}{k_\rho} \le  \mathcal P(\rho) < \frac{1}{k_\rho-1}$. We can immediately use these results to bound the coherent information as follows (see Supplementary Note 1 for details).

\textbf{Result 1}---Given a quantum state $\rho_{AB}$, its coherent information $I(A\rangle B)$ is bounded as follows:
 \begin{equation} \label{icthebound}
      l_e(\rho_{AB})\leq I(A\rangle B)\leq u_e(\rho_{AB})   
 \end{equation}
 where
 \begin{widetext}
\begin{equation}\label{Eq:CIbounds}
\begin{split}
&l_e(\rho_{AB})=(\lambda^{\text{m}}_{k_{\rho_B},\rho_B}-1) \log \lambda^{\text{m}}_{1,\rho_B}
 -\lambda^{\text{m}}_{k_{\rho_B},\rho_B} \log  \lambda^{\text{m}}_{k_{\rho_B},\rho_B}+ (1-\lambda^{\text{M}}_{1,\rho_{AB}}) \log \frac{(1-\lambda^{\text{M}}_{1,\rho_{AB}})}{(d-1)}+ \lambda^{\text{M}}_{1,\rho_{AB}} \log {\lambda^{\text{M}}_{1,\rho_{AB}}},\\
&u_e(\rho_{AB})=(1-\lambda^{\text{m}}_{k_{\rho_{AB}},\rho_{AB}}) \log \lambda^{\text{m}}_{1,\rho_{AB}}+\lambda^{\text{m}}_{k_{\rho_{AB}},\rho_{AB}} \log  \lambda^{\text{m}}_{k_{\rho_{AB}},\rho_{AB}}- (1-\lambda^{\text{M}}_{1,\rho_B}) \log \frac{(1-\lambda^{\text{M}}_{1,\rho_B})}{(d_B-1)} - \lambda^{\text{M}}_{1,\rho_B} \log \lambda^{\text{M}}_{1,\rho_B}.\\
\end{split}
\end{equation}
 \end{widetext}
The lower and upper bounds is tight for pure states~(${\cal P}(\rho_{AB})=1$) with $\mathcal{P}(\rho_B)=\frac{1}{d_B}$ and the difference $\epsilon_e=\mathcal{P}(\rho_B)-1/d_B$ certifies the tightness of $u_e(\rho)$ and $l_e(\rho)$.

\subsection{Quantification of quantum coherence.} 
In a way similar to how non-factorizable superpositions of multipartite states, e.g. $\sum_i c_i \ket{ii\ldots i }$, yield entanglement, the quantumness of a system can be identified with the degree of coherence of its state $\ket{\psi}=\sum_i c_i \ket{i}, \sum_i |c_i|^2=1,$ in a reference basis $\{\ket{i}\}$. 
One natural way to quantify the coherence of a state in a reference basis $\{\ket{1},\ket{2},\dots,\ket{d}\}$ of a $d$-dimensional Hilbert space $\mathcal{H}_d$ is by measuring how far it is to the set of incoherent states ${\cal I}$~\cite{BCP,herbut}. The choice of distance function is in principle arbitrary. Yet, an important operational interpretation is enjoyed by the relative entropy of coherence~\cite{BCP}
\begin{equation}\label{CREEQ}
  C_{\mathrm{RE}}(\rho) = \min\limits_{\sigma\in{\cal I}}S(\rho||\sigma)=S(\rho_{d}) - S(\rho),
\end{equation}
where $\rho_{d}=\sum_i \ket{i}\!\!\bra{i}\rho \ket{i}\!\!\bra{i}$ is the state after dephasing in the reference basis. In the asymptotic limit of infinite system preparations, $C_{\mathrm{RE}}(\rho)$ represents the maximal rate of extraction of maximally coherent qubit states $1/2\sum_{i,j=0,1}\ket{i}\bra{j}$ from $\rho$ by incoherent operations. Like the coherent information, this quantity is bounded by purity function (see~Supplementary Note 1 for details).

\begin{figure*}[ht!]
\centering
\includegraphics[width=\linewidth]{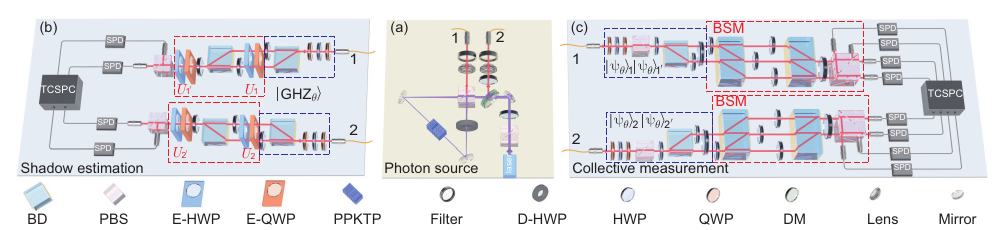}
\caption{\textbf{Schematic illustration of the experimental setup.} $\textbf{a}$ Generation of the biased polarization-entangled state $\cos\theta\ket{HH}_{12}+\sin\theta\ket{VV}_{12}$. $\textbf{b}$ Setup to extend $\cos\theta\ket{HH}_{12}+\sin\theta\ket{VV}_{12}$ into $\ket{\text{GHZ}_{\theta}}=\cos\theta\ket{HhHh}_{11^\prime22^\prime}+\sin\theta\ket{VvVv}_{11^\prime22^\prime}$, and demonstrate the shadow estimation scheme. $\textbf{c}$ Setup to prepare two-copy states and implement the collective measurement scheme. Symbols used in $\textbf{a}$, $\textbf{b}$, and $\textbf{c}$~BD: beam displacer; PBS: polarization beam splitter; SPD: single-photon detector; DM: dichroic mirror; E-HWP: electrically-rotated HWP; E-QWP: electrically-rotated QWP; D-HWP: dual-wavelength HWP; TCSPC: time-correlated
single-photon counting system. The abbreviation BSM represents Bell-state measurement.\label{Fig:setup}}
\end{figure*} 

\textbf{Result 2}---The relative entropy of coherence $C_{\mathrm{RE}}(\rho)$ is bounded as follows:
\begin{equation}\label{Eq:Estimationlb}
  l_c(\rho)\leq C_{\mathrm{RE}}(\rho)\leq u_c(\rho),
\end{equation}
where
\begin{widetext}
\begin{equation}\label{Eq:boundsofCR}
\begin{split}
&l_c(\rho)= (\lambda^{\text{m}}_{k_{\rho_d},\rho_d}-1) \log \lambda^{\text{m}}_{1,\rho_d}-\lambda^{\text{m}}_{k_{\rho_d},\rho_d} \log \lambda^{\text{m}}_{k_{\rho_d},\rho_d}
  + (1-\lambda^{\text{M}}_{1,\rho})\log\frac{(1-\lambda^{\text{M}}_{1,\rho})}{(d-1)} + \lambda^{\text{M}}_{1,\rho} \log{\lambda^{\text{M}}_{1,\rho}},\\
&u_c(\rho)=  (1-\lambda^{\text{m}}_{k_{\rho},\rho}) \log \lambda^{\text{m}}_{1,\rho} +\lambda^{\text{m}}_{k_{\rho},\rho} \log \lambda^{\text{m}}_{k_{\rho},\rho}
 -  (1-\lambda^{\text{M}}_{1,\rho_d}) \log \frac{(1-\lambda^{\text{M}}_{1,\rho_d})}{(d-1)} - \lambda^{\text{M}}_{1,\rho_d} \log \lambda^{\text{M}}_{1,\rho_d}.\\
\end{split}
\end{equation}
\end{widetext}
This inequality chain, like the one in Eq.~\ref{icthebound}, is tight for pure states~(${\cal P}(\rho)=1$) with diagonal matrix of $\rho_d=\frac{1}{d}\mathbb I_d$~($\mathcal{P}(\rho_d)=\frac{1}{d}$). The difference $\epsilon_c=\mathcal{P}(\rho_d)-1/d$ certifies the tightness of $u_c(\rho)$ and $l_c(\rho)$. $\epsilon_e(\epsilon_c)\to0$ indicates the maximally entangled state~(maximally coherent state), which is of particular interest in quantum information science.

\begin{figure*}[ht!]
\centering
\includegraphics[width=\linewidth]{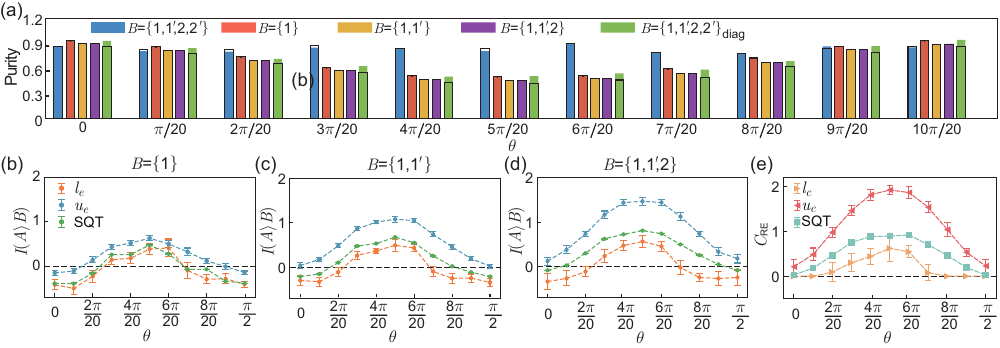}
\caption{\textbf{Experimental results of quantification of $I(A\rangle B)$ and $C_{\mathrm{RE}}$ on the prepared $\rho_{\text{GHZ}_{\theta}}$ by shadow estimation.} $\textbf{a}$ The estimation of global purity $\hat{\mathcal P}(\rho_{\text{GHZ}_{\theta}})$, marginal purity $\hat{\mathcal P}(\rho_B)$ and the purity of diagonal matrix $\hat{\mathcal P}(\rho_{\text{GHZ}_{\theta, d}})$. The colored bars represent the results from shadow estimation, while the black frames represent the results from SQT for comparison. $\textbf{b}-\textbf{d}$ The upper bound $u_e$ and lower bound $l_e$ of $I(A\rangle B)$ with $B=\{1\}$, $B=\{1, 1^\prime\}$ and $B=\{1, 1^\prime, 2\}$ respectively. $\textbf{e}$ The upper bound $u_c$ and lower bound $l_c$ of $C_{\mathrm{RE}}(\rho_{\text{GHZ}_{\theta}})$. The error bars represent the statistical error by repeating shadow estimation for 10 times.\label{Fig:CI}}
\end{figure*}

\subsection{Detecting purity with shadow estimation.}
We first use shadow tomography~\cite{huang2020predicting, PRXQuantumRobust21, PRXQuantumStruchalin21} to detect the purity of the four-qubit biased Greenberger-Horne-Zeilinger (GHZ) states in the form of
\begin{equation}\label{Eq:GHZ}
\ket{\text{GHZ}_{\theta}}=\cos\theta\ket{HhHh}_{11^\prime22^\prime}+\sin\theta\ket{VvVv}_{11^\prime22^\prime},
\end{equation}
which are encoded on the polarization and path degrees of freedom~(DOF) of photons. As shown in~Fig.~\ref{Fig:setup}$\textbf{a}$, the polarization-entangled photons are generated from a periodically poled potassium titanyl phosphate~(PPKTP) crystal set at Sagnac interferometer. Then, we then sent two photons into two beam displacers~(BDs) as shown in~Fig.~\ref{Fig:setup}$\textbf{b}$, which transmits the vertical polarization and deviates the horizontal polarization. Consequently, the biased GHZ state $\ket{\text{GHZ}_{\theta}}$ is obtained, where $h$ ($v$) denotes the deviated~(transmitted) spatial mode.

We prepare eleven $\rho_{\text{GHZ}_{\theta}}$ by setting $\theta\in[0, \frac{\pi}{2}]$ with interval of $\frac{\pi}{20}$, and then use $M=2\times10^4$ measurements in shadow estimation on each $\rho_{\text{GHZ}_\theta}$ to bound the coherent information $I(A\rangle B)$ of $\rho_{\text{GHZ}_{\theta}}$. We consider the bipartition of $\rho_{\text{GHZ}_{\theta}}$ with two subsystems $A$ and $B$, where $A\cup B=\left\{1, 1^\prime, 2, 2^\prime\right\}$ and $A\cap B=\varnothing$. Each subsystem contains $|A|$ and $|B|$ qubits, respectively. We consider three cases of $B=\{1\}$, $B=\{1, 1^\prime\}$ and $B=\{1, 1^\prime, 2\}$. The unbiased estimator of purities $\mathcal P_{\rho_{\text{GHZ}_{\theta}}}$ and $\mathcal P_{\rho_B}$ are constructed with $\{\hat{\rho}_{\text{GHZ}_{\theta}}^{(m)}\}$ by~\cite{huang2020predicting}
\begin{equation}\label{Eq:purity_CS}
\hat{\mathcal P}(\rho_{\text{GHZ}_{\theta}})=\frac{1}{M(M-1)}\sum_{m\neq m^{\prime}}\mathrm{Tr}\left[\hat{\rho}_{\text{GHZ}_{\theta}}^{(m)}\hat{\rho}_{\text{GHZ}_{\theta}}^{(m^{\prime})}\right]
\end{equation}
and
\begin{equation}\label{Eq:purity_sub}
 \hat{\mathcal P}(\rho_B)=\frac{1}{M(M-1)}\sum_{m\neq m^{\prime}}\mathrm{Tr}\left[\hat{\rho}_{B}^{(m)}\hat{\rho}_{B}^{(m^{\prime})}\right], 
\end{equation}
where $ \hat{\rho}_B=\bigotimes_{n\in B}3U_n^\dagger\ket{b_n}\bra{b_n}U_n-\mathbb{I}_2$. The results of $\hat{\mathcal P}(\rho_{\text{GHZ}_{\theta}})$ and $ \hat{\mathcal P}(\rho_B)$ are shown in~Fig.~\ref{Fig:CI}$\textbf{a}$. To indicate the accuracy of estimated purities, we perform standard quantum tomography~(SQT)~\cite{vogel1989determination,leonhardt1995quantum,white1999nonmaximally} on the prepared $\rho_{\text{GHZ}_{\theta}}$ with $1.4\times10^6$ measurements, and treat the reconstructed state as target state. With the reconstructed $\rho_{\text{GHZ}_{\theta}}$, we calculate the corresponding purities that are shown with black frames in~Fig.~\ref{Fig:CI}$\textbf{a}$. The maximal error between purities~(Eq.~\ref{Eq:purity_CS} and~Eq.~\ref{Eq:purity_sub}) estimated from classical shadows and SQT is $\epsilon=0.0132\pm0.0109$. The high accuracy~($\epsilon\ll1$) agrees well with the theoretical prediction that the measurement cost of shadow tomography is in the order of $2^{|AB|}/\epsilon^2$~\cite{elben2020mixed}, while the SQT requires~(at least) an order of $2^{|AB|}\text{rank}(\rho_{AB})/\epsilon^2$ measurements to reach the same accuracy~\cite{Tomography17,Efficientqt}. According to Eq.~\ref{Eq:CIbounds}, the lower bound $l_e$ and upper bound $u_e$ of $I(A\rangle B)$ can be calculated with the estimated purities,  and the results are shown with orange and blue dots in~Fig.~\ref{Fig:CI}$\textbf{b}$-Fig~\ref{Fig:CI}$\textbf{d}$ respectively. We observe that $l_e>0$ with $\theta=\frac{3\pi}{20}, \frac{4\pi}{20}, \frac{5\pi}{20}$ and $\frac{6\pi}{20}$, which indicates the corresponding $\rho_{\text{GHZ}_{\theta}}$ admits distillable entanglement. To investigate the tightness of lower and upper bounds of $I(A\rangle B)$, we calculate the $I(A\rangle B)$ with reconstructed $\rho_{\text{GHZ}_{\theta}}$ instead of theoretical predictions as $I(A\rangle B)$ is sensitive to noise~(See Supplementary Note 2 for analyzations). The results of calculated $I(A\rangle B)$) are shown with green dots in~Fig.~\ref{Fig:CI}$\textbf{b}$-Fig.~\ref{Fig:CI}$\textbf{d}$, in which we observe that $I(A\rangle B)$) is well bounded by $l_e$ and $u_e$ expect $\theta=6\pi/20$ in Fig.~\ref{Fig:CI}$\textbf{b}$. Similar phenomena are also observed in Fig.~\ref{Fig:CI}$\textbf{a}$, where the estimation of $\hat{\mathcal P}(\rho_{\text{GHZ}_{\theta, d}})$~(green bars) are larger than the results from SQT.  There are two main reasons attributed to these discrepancies. The first one is that the randomized measurement and SQT are performed separately, i.e., they are not obtained from the same copies of prepared $\rho_{\text{GHZ}_\theta}$. There are unavoidable noises such as the slight drifts of the mounts holding BDs, which would accordingly introduces errors in state preparation and detection. The second one is that we use maximal likelihood estimation~(MLE) in SQT to return a physical state from collected data . MLE is a biased estimation which underestimates properties of unknown quantum state~\cite{Schwemmer2015PRL}, while the shadow tomography we implemented is an unbiased estimation of purity~\cite{huang2020predicting}.

\begin{figure*}[ht!]
\centering
\includegraphics[width=\linewidth]{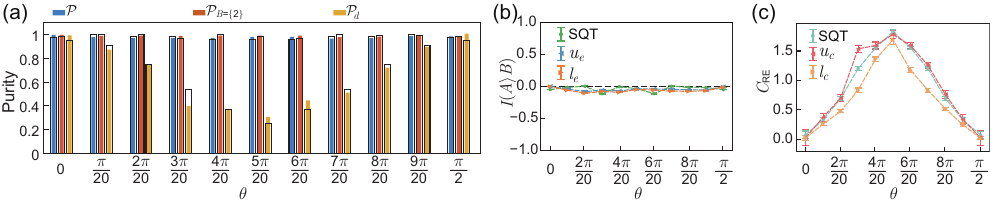}
\caption{\textbf{Experimental results of quantification of $I(A\rangle B)$ and $C_{\mathrm{RE}}$ of $\rho_{\psi_{2,\theta}}$ by collective measurements.} $\textbf{a}$ The estimated purities of $\mathcal P(\rho_{\psi_{2,\theta}})$, $\mathcal P(\rho_{\psi_{2,\theta,B}})$ and $\mathcal P(\rho_{\psi_{2,\theta,d}})$. $\textbf{b}$ The upper bound $u_e$ and lower bound $l_e$ of $I(A\rangle B)$ with $B=\{2\}$. $\textbf{c}$ The upper bound $u_c$ and lower bound $l_c$ of $C_{\mathrm{RE}}(\rho_{\psi_{2,\theta}})$. The error bars represent standard deviations obtained from conducting the experiment ten times.\label{Fig:RE}}
\end{figure*}

To bound $C_{\mathrm{RE}}(\rho_{\text{GHZ}_{\theta}})$, we calculate the purity of the diagonal matrix of $\rho_{\text{GHZ}}$ by $\hat{\mathcal P}(\rho_{\text{GHZ}_{\theta, d}})=\sum_{i=1}^{16}{d_{i}^{2}}$ with $d_i$ being the diagonal elements of $\hat{\rho}_{\text{GHZ}_{\theta}}=\sum_{m=1}^M\hat{\rho}_{\text{GHZ}_{\theta}}^{(m)}$. The results of $\hat{\mathcal P}(\rho_{\text{GHZ}_{\theta, d}})$ are shown with green bars in~Fig.~\ref{Fig:CI}$\textbf{a}$. Thus, $u_c$ and $l_c$ are deduced with estimated $\hat{\mathcal P}(\rho_{\text{GHZ}_{\theta, d}})$ and $\hat{\mathcal P}(\rho_{\text{GHZ}_{\theta}})$ according to~Eq.~\ref{Eq:boundsofCR}. As $C_{\mathrm{RE}}\geq 0$, we set $l_c=0$ whenever it takes negative values. The results of the calculated $u_c$ and $l_c$ are shown with red and yellow triangles in~Fig.~\ref{Fig:CI}$\textbf{e}$, in which one observes they tightly bound $C_{\text{RE}}(\rho_{\text{GHZ}_\theta})$ from SQT (cyan squares).

\subsection{Detecting purity with collective measurements.} 
The purity of a quantum state $\rho$ can be indicated from two copies of $\rho$ by $\mathcal P(\rho)=\tr(\rho^2)=\tr(\mathbb V\rho \otimes \rho)$ with $\mathbb V$ being the swap operation on $\rho\otimes\rho$~\cite{Bovino2005PRL,yuan2020direct,roik2022entanglement,conlon2023approaching}. The purity from collective measurement has been demonstrated to extract Renyi entropy for violation of entropic inequalities to witness entanglement~\cite{Bovino2005PRL}. This Renyi quantity, while able to certify entanglement as it is an entanglement witness~\cite{terhal}, does not quantify it. We consider the case of two-qubit state in the form of $\ket{\psi_{2, \theta}}=\ket{\psi_\theta}_1\ket{\psi_\theta}_2$ with $\ket{\psi_\theta}_1=\ket{\psi_\theta}_2=\cos\theta\ket{0}+\sin\theta\ket{1}$. Experimentally, $\ket{\psi_{2, \theta}}$ is encoded in the polarization DOF and the setup to generate $\ket{\psi_{2, \theta}}$ as shown in~Fig.~\ref{Fig:setup}$\textbf{c}$. We first post-select the component $\ket{H}_1\ket{H}_2$ using two polarizing beam splitters~(PBSs). By applying a HWP that transforms $\ket{H}$ to $\cos\theta\ket{H}+\sin\theta\ket{V}$ individually on photon 1 and photon 2, $\ket{\psi_{2, \theta}}$ is obtained. The copy of $\ket{\psi_{2, \theta}}$ is encoded in the path DOF, i.e., $\ket{\psi_\theta}_{1^\prime}=\ket{\psi_\theta}_{2^\prime}=\cos\theta\ket{h}+\sin\theta\ket{v}$.

The swap operation on $\mathbb V$ on $\rho\otimes\rho$ can be implemented by performing Bell-state measurement~(BSM) between each qubit and its corresponding copy~\cite{PhysRevLett.93.110501,islam2015measuring,PhysRevLett.113.170401}. In our case, the BSM is performed between the polarization-encoded qubit $1(2)$ and the path-encoded qubit $1^\prime(2^\prime)$~\cite{Activating21}, respectively. The outcome probability of the two BSMs on $\rho_{12}\otimes\rho_{1^\prime2^\prime}$ is denoted by $p_{ij}= \mathrm{Tr}[(\Pi_i \otimes \Pi_j)\rho_{\psi_{2,\theta}}\otimes\rho_{\psi_{2,\theta}}]$, where 
$\Pi_1=\ket{\Psi^+}\bra{\Psi^+}$, $\Pi_2=\ket{\Psi^-}\bra{\Psi^-}$, $\Pi_3=\ket{\Phi^+}\bra{\Phi^+}$, and $\Pi_4=\ket{\Phi^-}\bra{\Phi^-}$
are projectors onto Bell states $\ket{\Psi^{\pm}}=(\ket{Hv}\pm \ket{Vh})/\sqrt{2}$ and  $\ket{\Phi^{\pm}}=(\ket{Hh}\pm \ket{Vv})/\sqrt{2}$. The purity of $\rho_{\psi_{2,\theta}}$ and the subsystem purity of $\rho_{\psi_{2,\theta,B}}$ with $B=\{2\}$  are then obtained by
\begin{equation}\label{BSM2}
\begin{aligned}
\mathcal P(\rho_{\psi_{2,\theta}})=1-2(p_{12}+p_{32}+p_{42}+p_{21}+p_{23}+p_{24}),
\end{aligned}
\end{equation}
and
\begin{equation}\label{BSMD}
\begin{aligned}
\mathcal P(\rho_{\psi_{2,\theta,B}})=1-2(p_{12}+p_{22}+p_{32}+p_{42}).
\end{aligned}
\end{equation}

Similarly, the purity of the diagonal matrix of $\rho_{\psi_{2,\theta}}$ can be obtained by
\begin{equation}\label{BSMd}
\begin{aligned}
\mathcal P(\rho_{\psi_{2,\theta,d}})=1-2p_{33}+2p_{44}-p_{11}. 
\end{aligned}
\end{equation}  
The results of $\mathcal P(\rho_{\psi_{2,\theta}})$, $\mathcal P(\rho_{\psi_{2,\theta, B}})$ and $\mathcal P(\rho_{\psi_{2,\theta,d}})$ are shown in~Fig.~\ref{Fig:RE}$\textbf{a}$, with $\theta\in[0, \frac{\pi}{2}]$ with interval of $\frac{\pi}{20}$. The lower bound $l_e$ and upper bound $u_e$ of $I(A\rangle B)$ are calculated according to~Eq.~\ref{Eq:CIbounds} and shown in~Fig.~\ref{Fig:RE}$\textbf{b}$. We observe $u_e<0$ for all $\rho_{\psi_{2,\theta}}$, which indicates the prepared $\rho_{\psi_{2,\theta}}$ is less useful for entanglement distillation. Similarly, the lower bound $l_c$ and upper bound $u_c$ of $C_{\mathrm{RE}}(\ket{\psi_{2, \theta}})$ can be calculated according to~Eq.~\ref{Eq:boundsofCR}. The results are shown in~Fig.~\ref{Fig:RE}$\textbf{c}$. Note that $l_c$ is much closer to $u_c$ compared to the case in~Fig.~\ref{Fig:CI}$\textbf{e}$. This is because the bounds $l_c$ and $u_c$ are functions of the leading order term (purity) in Taylor expansion of the von Neumann entropy about pure states, so that $l_c$ and $u_c$ are tight for pure states. Experimentally, the prepared $\rho_{12}$ and $\rho_{1^\prime2^\prime}$ are quite close to the ideal form of  $\ket{\psi_{2, \theta}}$, while $\rho_{\text{GHZ}_{\theta}}$ is much more noisy. The high accuracy of $l_c$ and $u_c$ is also confirmed by $C_{\mathrm{RE}}(\rho_{12})$ with reconstructed $\rho_{12}$ from SQT, which is shown with cyan dots in~Fig.~\ref{Fig:RE}$\textbf{c}$. 

\section*{Discussion}
We demonstrated universal and computable theoretical bounds to operationally meaningful measures of entanglement and coherence in terms of purity functionals.
Then, we experimentally extracted these bounds by implementing two purity detection methods: shadow estimation and collective measurements. The experiment showed that quantum resources can be {\it estimated}, rather than just {\it witnessed}, with a precision that does not scale with the rank of the state (guaranteed by theory~\cite{Efficientqt,Tomography17,huang2020predicting,elben2020mixed}), conversely to state tomography. The scalability of the measurement network makes purity detection employable in testing the successful preparation of quantum superpositions in large computational registers, certifying that a complex device has run a truly quantum computation. The proposed bounds are sufficiently tight for practically useful quantum states, i.e., the high-fidelity GHZ-like states or maximally coherent states, which are important entanglement and coherence resources that are widely used in quantum information protocols. The bounds Eqs.~(\ref{icthebound}) and (\ref{Eq:Estimationlb}) represent the leading order term in Taylor's expansion of the von Neumann entropy. Thus, tightened bounds for noisy states can be extracted by evaluating the higher-order terms
$\mathrm{Tr}(\rho^3), \mathrm{Tr}(\rho^4), \ldots,
\mathrm{Tr}(\rho^{d})$, which can be efficiently detected with hybrid shadow estimation~\cite{zhou2022hybrid,peng2024experimental}. In particular, the bounds become strict when we include moments of the system dimension. It would be interesting for future work to study the tightness of the bounds for the intermediate cases. Another unexplored direction is that one can extend the method proposed here to determine directly measurable bounds to the total correlations in multipartite systems $\{A_i\}$. For instance, consider the quantum analogue of the multi-information between random variables~\cite{han,modi}
\begin{equation}{\cal I}(\rho_{A_1,\ldots,A_n}) =\min\limits_{\bigotimes_i\sigma_{A_i}}S\left (\rho_{A_1,\ldots,A_n}||\bigotimes_i\sigma_{A_i}\right).
\end{equation}
It is easy to verify that the product of the state marginals $\bigotimes_i\rho_{A_i}$ solves the minimization, ${\cal I}(\rho_{A_1,\ldots,A_n})=\sum_i S(\rho_{A_i})-S(\rho_{A_1,\ldots,A_n})$. Quantitative bounds to the total system correlations in terms of purities are given by a straightforward generalization of~Eq.~\ref{icthebound}. 

Our work has important and wide practical applications in various fields in quantum computation, communication, quantum thermodynamics, quantum many-body physics, etc. The proposed method has an immediate application in benchmarking current and near-term quantum technologies and serves as a basic and useful tool for analyzing and optimizing practical implementations of quantum information protocols.

\section*{Methods}
\subsection{Biased polarization-entangled photon source.}
We use a continuous-wave laser operating at a central wavelength of 405 nm with a full width at half maximum (FWHM) of 0.012~nm as our pump light source. The pump light passes through a PBS followed by an HWP set at $\theta/2$, which transforms the polarization of the pump light into $\cos\theta\ket{H}_{p}+\sin\theta\ket{V}_{p}$. The pump light passes PBS that transmits the component of $\ket{H}$ and reflects component of $\ket{V}$. Then, the PPKTP crystal is coherently pumped from anticlockwise and clockwise directions respectively, and the generated photons are superposed on the PBS leading to the outcome state of $\cos\theta\ket{HV}_{12}+\sin\theta\ket{VH}_{12}$. An HWP set at $45^{\circ}$ is applied on photon 2, which leads to a biased polarization-entangled state in form of $\cos\theta\ket{HH}_{12}+\sin\theta\ket{VV}_{12}$. To enhance collective efficiency, we employ lens L1 with a focal length of 200~mm and lens L2 with a focal length of 250~mm. The two photons pass through narrowband filters~(NBFs) with an FWHM of 3~nm and then are coupled into single-mode fibres.

\subsection{Shadow tomography.}  In shadow tomography, local random unitary operations $U_n\in\text{Cl}_2$ are individually applied on each qubit of an $N$-qubit state $\rho$, where $\text{Cl}_2$ is the single-qubit Clifford group. Then the rotated state is measured on the Pauli-$Z$ basis, producing a bit string $\ket{b}=\ket{b_1b_2\cdots b_N}, b_n\in\{0, 1\}$. The classical shadow of a single experimental run is constructed by $\hat{\rho}=\bigotimes_{n=1}^N 3U_n^\dagger\ket{b_n}\bra{b_n}U_n-\mathbb{I}_2$ with $\mathbb{I}_2$ being identity matrix. By repeating the measurement $M$ times, one has a collection of classical shadows $\{\hat{\rho}^{(m)}\}$ which is further exploited for the estimation of various properties of the underlying state $\rho$~\cite{elben2020mixed, brydges2019probing}.  The random unitary operations $U_n\in\text{Cl}_2$ on the polarization and path DOF are implemented with a combination of electrical-controlled half waveplate~(E-HWP) and quarter waveplate~(E-QWP)~\cite{Shadows21}, and the projective measurements on the Pauli-$Z$ basis are sequentially performed on the polarization and path DOF~(See Supplementary Note 2 for more details).

\section*{Data Availability}
The data that support the findings of this study have been deposited in the Zenodo
database with the identifier https://zenodo.org/records/11386676

\section*{Code availability}
The code supporting the findings of this study have been deposited in the Zenodo
database with the identifier https://zenodo.org/records/11386676

\section*{Acknowledgements}
The authors thank the anonymous reviewers for the insightful comments on the work. This work is supported by the National Key R\&D Program of China (Grant No.~2019YFA0308200), the Innovation Program for Quantum Science and Technology (Grant No.~2023ZD0300200), the National Natural Science Foundation of China (Grants No.~11974213, No.~92065112, No.~12175003 and No.~12361161602), NSAF (Grant No.~U2330201), Shandong Provincial Natural Science Foundation (Grants No.~ZR2020JQ05 and No.~ZR2023LLZ005), Taishan Scholar of Shandong Province (Grant No.~tsqn202103013), Shenzhen Fundamental Research Program (Grant No.~JCYJ20220530141013029), the Higher Education Discipline Innovation Project (``111") (Grant No. B13029) and the High-Performance Computing Platform of Peking University.

\section*{Competing Interests}
The authors declare no Competing Financial or Non-Financial Interests.

\section*{Author Contributions}
G.~S., J. A. S., Q.~Z., D.~G., X.~M. and X.~Y. concreted the theory. H. L. conceived and designed the experiment. T.~Z., L.~L., X.-J.~P. and H.~L. carried out the experiment and analyzed data. All authors
contributed to writing the manuscript.


\appendix
\renewcommand{\figurename}{\textbf{Supplementary Fig}}
\renewcommand{\tablename}{\textbf{Supplementary Table}}

\onecolumngrid
\section*{Supplementary note 1: Derivation of the bounds}
Given a quantum state $\rho$ in a $d$-dimensional Hilbert space, our task is to bound the Von Neumann entropy of $\rho$ with a function of the state purity $\mathcal P(\rho):=\mathrm{Tr}(\rho^2).$
The spectral decomposition of the quantum state is $\rho = \sum_{i=1}^d\lambda_i\ket{\psi_i}\bra{\psi_i}$, where $\{\ket{\psi_i}\}$ forms an orthonormal basis of the $d$-dimensional Hilbert space. The variational problem is then formulated as
\begin{equation}\label{Eq:maxd}
\begin{aligned}
\max/\min S(\rho)&= -\sum_{i=1}^d\lambda_i\log(\lambda_i)\\
s.t.~~ &\sum_{i=1}^d\lambda_i^2 = \gamma\\
&\sum_{i=1}^d\lambda_i=1\\
&0\le \lambda_i \le1, \forall i,
\end{aligned}
\end{equation}
where $\mathcal P(\rho)=\mathrm{Tr} \rho^2$ is the purity of $\rho$.

Intuitively, the vector $\lambda$ that maximizes $S$ is the one that spread as uniformly as possible; while the vector $\lambda$ that minimizes $S$ is the one that has the minimal number of nonzero large values. In the following, we will analytically solve this problem and confirm this intuition.
\subsection{Maximization}
First, we focus on the maximization problem with $d=3$. Note that when $d=2$, the solution to the constraints of~\cref{Eq:maxd} is unique and the optimization problem will be trivial. Without loss of generality, we assume $\lambda_1\ge \lambda_2\ge \lambda_3$. Then the problem can be stated as
\begin{equation}\label{Eq:max3}
\begin{aligned}
\max S(\rho)&= -\lambda_1\log(\lambda_1)-\lambda_2\log(\lambda_2)-\lambda_3\log(\lambda_3)\\
s.t.~~ &\lambda_1^2+\lambda_2^2+\lambda_3^2 = \gamma\\
&\lambda_1+\lambda_2+
\lambda_3=1\\
&1\ge \lambda_1\ge \lambda_2\ge \lambda_3\ge0. \\
\end{aligned}
\end{equation}
We prove that the maximum is reached with the following Lemma.
\begin{lemma}\label{lamme:max3}
The solution to the maximization problem in~\cref{Eq:max3} is given by
\begin{equation}\label{lamme:max3solution}
\begin{aligned}
\lambda_1 &= \frac{1}{3}+\sqrt{\frac{2}{3}\left(\mathcal P(\rho)-\frac{1}{3}\right)},\\
  \lambda_2 & = \lambda_3 = \frac{1-\lambda_1}{2}.
\end{aligned}
\end{equation}
\end{lemma}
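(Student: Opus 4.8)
The plan is to treat the optimization in~\cref{Eq:max3} as effectively one-dimensional via Lagrange/KKT multipliers and then compare the finitely many candidate points. Fix the purity value $\gamma=\mathcal P(\rho)$; by Cauchy--Schwarz together with $\lambda_i\le 1$ one has $\gamma\in[1/3,1]$, and $\gamma=1/3$ forces the uniform spectrum, so assume $1/3<\gamma\le 1$. The feasible set $F_\gamma=\{(\lambda_1,\lambda_2,\lambda_3):\lambda_1\ge\lambda_2\ge\lambda_3\ge 0,\ \sum_i\lambda_i=1,\ \sum_i\lambda_i^2=\gamma\}$ is a compact arc of the circle obtained by intersecting the plane $\sum_i\lambda_i=1$ with the sphere $\sum_i\lambda_i^2=\gamma$; this circle and all three planes $\lambda_i=\lambda_j$ pass through the centre $(1/3,1/3,1/3)$, and $S$ is continuous on $F_\gamma$ and smooth on its relative interior. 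I would first show the maximum sits at an endpoint of this arc: at a critical point of $S$ in the relative interior, where $\lambda_1>\lambda_2>\lambda_3>0$, the stationarity conditions read $-\log\lambda_i-1/\ln 2=\alpha+2\beta\lambda_i$ for all $i$, i.e.\ the function $\psi(\lambda):=\log\lambda+2\beta\lambda$ takes the same value at $\lambda_1,\lambda_2,\lambda_3$; but $\psi''(\lambda)=-1/(\lambda^2\ln 2)<0$, so $\psi$ is strictly concave and attains any value at most twice, contradicting that the three eigenvalues are distinct. Hence $S|_{F_\gamma}$ has no interior critical point, is strictly monotone along the arc, and is maximized at one of its two endpoints.

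The second step is to identify those endpoints, i.e.\ the points of $F_\gamma$ where one of the defining inequalities is saturated. Imposing $\lambda_2=\lambda_3$ together with the two equality constraints gives $\lambda^{(1)}=(a,b,b)$ with $b=\tfrac13-\delta$, $a=\tfrac13+2\delta$, $\delta=\sqrt{(\gamma-1/3)/6}$, so that $a=\tfrac13+\sqrt{\tfrac23(\gamma-\tfrac13)}$ --- precisely the spectrum in~\cref{lamme:max3solution} --- and $0\le b\le a\le 1$ for every $\gamma\le 1$. Imposing $\lambda_1=\lambda_2$ instead gives $\lambda^{(2)}=(s,s,t)$ with $s=\tfrac13+\delta$, $t=\tfrac13-2\delta$, which lies in the orthant only for $\gamma\le 1/2$; when $\gamma\ge 1/2$ the arc is instead truncated by $\lambda_3=0$, with endpoint $\lambda^{(0)}=(c,d,0)$, $c,d=\tfrac12\pm\tfrac12\sqrt{2\gamma-1}$, and one checks $\lambda^{(2)}=\lambda^{(0)}=(1/2,1/2,0)$ at $\gamma=1/2$. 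So it remains to show that $S(\lambda^{(1)})$ exceeds the value of $S$ at the other endpoint.

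For $1/3<\gamma\le 1/2$ the competitor is $\lambda^{(2)}$. Writing $h(p)=-p\log p$ and $\phi(x):=h(\tfrac13+x)-h(\tfrac13-x)$, a direct rearrangement gives $S(\lambda^{(1)})-S(\lambda^{(2)})=\phi(2\delta)-2\phi(\delta)$. Since $\phi(0)=0$ and $\phi''(x)=h''(\tfrac13+x)-h''(\tfrac13-x)>0$ on $(0,\tfrac13)$ (because $h''(p)=-1/(p\ln 2)$ is increasing), $\phi$ is strictly convex there, whence $\phi(2\delta)>2\phi(\delta)$ for $\delta>0$ and $S(\lambda^{(1)})>S(\lambda^{(2)})$. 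For $1/2\le\gamma<1$ the competitor is $\lambda^{(0)}$: if $S(\lambda^{(1)})=S(\lambda^{(0)})$ for some such $\gamma$, then the strictly monotone $S|_{F_\gamma}$ would take equal values at the two ends of its arc, which is impossible; since $S(\lambda^{(1)})>S(\lambda^{(0)})$ at $\gamma=1/2$ (by the previous case, as $\lambda^{(0)}=\lambda^{(2)}$ there) and both sides depend continuously on $\gamma$, the strict inequality persists throughout $[1/2,1)$ (the two coincide only at the pure state $\gamma=1$). This establishes that $\lambda^{(1)}$ is the unique maximizer.

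The step I expect to be the main obstacle is the comparison of $\lambda^{(1)}$ with the boundary point $\lambda^{(0)}$ in the second regime: these two have the same purity but neither majorizes the other, so Schur-concavity of $S$ alone does not decide the contest. The monotonicity-plus-continuity argument above circumvents this, but one could instead establish $S(\lambda^{(1)})-S(\lambda^{(0)})>0$ on $[1/2,1)$ directly by an elementary estimate in the single variable $\gamma$, both entropies being explicit functions of $\gamma$ that agree only at $\gamma=1$.
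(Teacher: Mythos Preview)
Your proof is correct but takes a longer path than the paper's. Both arguments establish that $S$ is strictly monotone along the feasible arc $F_\gamma$: you do this via Lagrange multipliers (the stationarity condition forces $\psi(\lambda):=\log\lambda+2\beta\lambda$ to take the same value at three distinct points, impossible since $\psi$ is strictly concave), while the paper computes the derivative directly. Using the two equality constraints to eliminate $d\lambda_1,d\lambda_2$ in favour of $d\lambda_3$, one finds
\[
\frac{dS}{d\lambda_3}=(\lambda_2-\lambda_3)\left[\frac{\log\lambda_3-\log\lambda_2}{\lambda_3-\lambda_2}-\frac{\log\lambda_1-\log\lambda_2}{\lambda_1-\lambda_2}\right]\ge 0,
\]
the bracket being nonnegative by the concavity of $\log$. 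This gives not just monotonicity but its \emph{direction}: $S$ increases with $\lambda_3$, so the maximizer is the endpoint where $\lambda_3$ is largest, namely $\lambda_2=\lambda_3$, and one is done. Your route, having established monotonicity abstractly without determining its sign, must then compare the two endpoints by hand --- the $\phi$-convexity calculation for $\gamma\le 1/2$ and the continuity-in-$\gamma$ argument for $\gamma\ge 1/2$. These steps are correct (and the endpoint comparison you flagged as the main obstacle is indeed settled by your monotonicity-plus-continuity device, since for each fixed $\gamma$ the two endpoints cannot give equal $S$), but the whole case split is avoidable once one knows the sign of $dS/d\lambda_3$.
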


\begin{proof}
The differential of the entropy function $S(\rho)$ and the constraints are given by
\begin{equation}\label{Eq:ConstraintDiff1}
\begin{aligned}
 dS&= -\left(\frac{1}{\ln(2)}+\log \lambda_1\right)d\lambda_1-\left(\frac{1}{\ln(2)}+\log \lambda_2\right)d\lambda_2\\
 &-\left(\frac{1}{\ln(2)}
 +\log \lambda_3\right)d\lambda_3\\
 \end{aligned}
\end{equation}
and
\begin{equation}\label{Eq:ConstraintDiff2}
\begin{aligned}
&\lambda_1d\lambda_1+\lambda_2d\lambda_2+\lambda_3d\lambda_3 = 0\\
 &d\lambda_1+d\lambda_2+d\lambda_3=0,
  \end{aligned}
\end{equation}
respectively. We rewrite~\cref{Eq:ConstraintDiff2} to
\begin{equation}\label{Eq:}
\begin{aligned}
 d\lambda_1=&-\frac{(\lambda_3-\lambda_2)}{\lambda_1-\lambda_2}d\lambda_3,\nonumber\\
 d\lambda_2=&-\frac{(\lambda_1-\lambda_3)}{\lambda_1-\lambda_2}d\
 \lambda_3.
  \end{aligned}
\end{equation}
Thus, the differential of the entropy function becomes
\begin{equation}\label{Eq:Constraint}
\begin{aligned}
dS(\rho) &= \frac{d\lambda_3}{\lambda_1-\lambda_2}[(\lambda_3-\lambda_2)\log \lambda_1 +(\lambda_1-\lambda_3)\log \lambda_2 \\
&\quad +(\lambda_2-\lambda_1)\log \lambda_3] \\
&= (\lambda_2-\lambda_3)\left[-\frac{\log \lambda_1-\log \lambda_2}{\lambda_1-\lambda_2}+\frac{\log \lambda_3-\log \lambda_2}{\lambda_3-\lambda_2}\right]d\lambda_3
\end{aligned}
\end{equation}
Since the function $\log \lambda$ is concave for $\lambda\in[0,1]$,
for $\lambda_1\ge \lambda_2\ge \lambda_3$,
\begin{equation}
\frac{\log \lambda_1-\log \lambda_2}{\lambda_1-\lambda_2} \le \frac{\log \lambda_3-\log \lambda_2}{\lambda_3-\lambda_2}.\nonumber
\end{equation}
Thus, $dS(\rho)/d\lambda_3\ge 0$. To reach the maximum of $S(\rho)$, we thus only need to set $\lambda_3$ to be its maximum, which happens when $\lambda_2=\lambda_3$. Together with the constraints, then we can solve the equations and show that the solution to the maximization problem is given in~\cref{lamme:max3solution}.
\end{proof}
Now, we can solve the maximization problem of~\cref{Eq:maxd} for a general case of $d$.
\begin{theorem}\label{theorem:max}
Suppose $\lambda_1\ge \lambda_2 \ge \dots \lambda_d$. The solution to the maximization problem in~\cref{Eq:maxd} is
\begin{equation}\label{Eq:solutionmaxd}
\begin{aligned}
  \lambda_1 &= \frac{1}{d}+\sqrt{\frac{d-1}{d}\left(\mathcal P(\rho)-\frac{1}{d}\right)},\\
  \lambda_2 &=\lambda_3=\dots=\lambda_d = \frac{1-\lambda_1}{d-1}.
\end{aligned}
\end{equation}
\end{theorem}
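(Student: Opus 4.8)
The plan is to reduce the $d$-level optimisation to the three-level case already settled in \cref{lamme:max3}. First, a maximiser exists: the feasible set $\{\lambda\in\mathbb R^{d}:\sum_i\lambda_i=1,\ \sum_i\lambda_i^2=\mathcal P(\rho),\ 1\ge\lambda_1\ge\cdots\ge\lambda_d\ge0\}$ is compact and $S=-\sum_i\lambda_i\log\lambda_i$ (with $0\log 0:=0$) is continuous on it, so the maximum is attained at some $\lambda^{\star}$. Everything then follows from the claim that $\lambda^{\star}_2=\lambda^{\star}_3=\cdots=\lambda^{\star}_d$: feeding this into $\lambda_1+(d-1)\lambda_2=1$ and $\lambda_1^2+(d-1)\lambda_2^2=\mathcal P(\rho)$ gives $d\lambda_1^2-2\lambda_1+1-(d-1)\mathcal P(\rho)=0$, whose root with $\lambda_1\ge 1/d$ is exactly \cref{Eq:solutionmaxd} (and that root automatically obeys $\lambda_1\le1$, $\lambda_2\ge0$; the pure-state endpoint $\mathcal P(\rho)=1$, $\lambda_1=1$ is included, and $\mathcal P(\rho)\ge 1/d$ always holds by Cauchy--Schwarz so the square root is real).

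To prove the claim I would argue by contradiction: if $\lambda^{\star}_2=\cdots=\lambda^{\star}_d$ fails, there is an index $m\in\{2,\dots,d-1\}$ with $\lambda^{\star}_m>\lambda^{\star}_{m+1}$. Freeze every coordinate except the triple $(\lambda_1,\lambda_m,\lambda_{m+1})$, and also freeze their partial sums $\lambda_1+\lambda_m+\lambda_{m+1}$ and $\lambda_1^2+\lambda_m^2+\lambda_{m+1}^2$; what remains is precisely the constrained three-level problem of \cref{lamme:max3}, whose optimum has the two smaller entries equal. Since here $\lambda_m>\lambda_{m+1}$, the $dS$ identity derived in the proof of \cref{lamme:max3} shows that $S$ strictly increases along the feasible curve that raises $\lambda_{m+1}$ towards $\lambda_m$ (along which $\lambda_1$ increases and $\lambda_m$ decreases). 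The key point is that this move respects all global constraints: $\lambda_1$ only grows and nothing lies above it; $\lambda_m$ only shrinks, staying $\le\lambda_{m-1}$ and, because the curve halts at $\lambda_m=\lambda_{m+1}$, staying $\ge\lambda_{m+1}$; and $\lambda_{m+1}$ grows but stays $\le\lambda_m$ and $\ge0$. This contradicts maximality of $\lambda^{\star}$, proving the claim.

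The subtle point — and the main obstacle — is that the $dS$ identity of \cref{lamme:max3} is obtained by dividing by $\lambda_1-\lambda_2$, hence degenerates when the chosen triple is ``flat at the top'', i.e.\ $\lambda_1=\lambda_m$; this is exactly the family of configurations $\lambda_1=\cdots=\lambda_m>\lambda_{m+1}$ (the ``several large, rest equal'' states one must also outperform). For these I would use a second-order variation: split two equal top entries as $u\mapsto u\pm\delta$ and raise the small entry $v$ to $v+c$, with $\delta\asymp\sqrt{(u-v)c}$ chosen so as to remain on the purity surface to leading order. A short Taylor expansion then gives $\Delta S=\tfrac{c}{\ln 2}\bigl[-\ln(v/u)-1+v/u\bigr]+o(c)$, which is strictly positive for small $c>0$ because $\ln t<t-1$ for $t\in(0,1)$; the ordering check is the same as before (when $v=0$ the term $-c\log c$ dominates and the sign is immediate). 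An alternative that sidesteps the case split is a Lagrange-multiplier argument: at an interior stationary point every $\lambda_i$ is a root of $\log x+2\mu x+\nu'=0$, whose left-hand side is unimodal on $(0,1)$, so $\lambda^{\star}$ takes at most two distinct values; an induction on $d$ then eliminates zero-eigenvalue configurations, after which one compares the finitely many two-value candidates. Either way the remaining computation is routine.
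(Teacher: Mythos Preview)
Your proof is correct and follows essentially the same strategy as the paper's: both argue by contradiction, freezing all but three coordinates and reducing to \cref{lamme:max3}. The paper chooses the triple $(\lambda_1,\lambda_2,\lambda_d)$, rescales by their sum $b$ so that the subproblem becomes literally an instance of \cref{Eq:max3}, and then cites the Lemma's \emph{statement} rather than its $dS$ identity---thereby sidestepping the $\lambda_1=\lambda_m$ degeneracy you treat separately; your version is more explicit about compactness, ordering, and that boundary case, but the core idea is identical.
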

\begin{proof}
The solution in~\cref{Eq:solutionmaxd} is exactly determined when setting $\lambda_2=\lambda_3=\dots=\lambda_d$. Suppose the maximization problem solution is not this one, then we must have that $\lambda_2>\lambda_d$. In the following, we prove the contradiction by showing that changing the values of $\lambda_1, \lambda_2, \lambda_d$ would make the entropy $S(\rho)$ larger while fixing all other values ($\lambda_3, \lambda_4, \dots, \lambda_{d-1}$) and the constraints. Now the constraints for $\lambda_1$, $\lambda_2$, and $\lambda_d$ becomes
\begin{equation}\label{Eq:cons12d}
\begin{aligned}
&\lambda_1^2+\lambda_2^2+\lambda_d^2 = a\\
 &\lambda_1+\lambda_2+\lambda_d=b.
 \end{aligned}
\end{equation}
By defining $\lambda_1' = \lambda_1/b$, $\lambda_2' = \lambda_2/b$, $\lambda_d' = \lambda_d/b$, the relations become
\begin{equation}\label{Eq:cons12dp}
\begin{aligned}
\lambda_1'^2+\lambda_2'^2+\lambda_d'^2 &= a/b^2\\
 \lambda_1'+\lambda_2'+\lambda_d'&=1.
 \end{aligned}
\end{equation}
The entropy function is
\begin{equation}\label{Eq:}
\begin{aligned}
S(\rho)&=-\sum_{i=1}^d \lambda_i\log(\lambda_i),\\
 &=S_{1,2,d}(\rho) +S_r(\rho),
 \end{aligned}
\end{equation}
where $S_{1,2,d}(\rho) = -\lambda_1\log(\lambda_1)-\lambda_2\log(\lambda_2)-\lambda_d\log(\lambda_d)$ and $S_r(\rho)=-\sum_{i=3}^{d-1} \lambda_i\log(\lambda_i)$. Since $S_r(\rho)$ is fixed, we need to maximize $S_{1,2,d}(\rho)$, which can also be represented as
\begin{equation}\label{Eq:}
\begin{aligned}
&S_{1,2,d}(\rho) = -b\lambda_1'\log(b\lambda_1')-b\lambda_2'\log(b\lambda_2')-b\lambda_d'\log(b\lambda_d')\\
&= b[-\lambda_1'\log(\lambda_1')-\lambda_2'\log(\lambda_2')-\lambda_d'\log(\lambda_d')]-b\log b\\
 \end{aligned}
\end{equation}
Denoting $S_{1,2,d}'(\rho) = -\lambda_1'\log(\lambda_1')-\lambda_2'\log(\lambda_2')-\lambda_d'\log(\lambda_d')$, this optimization problem has the same form of~\cref{Eq:max3}. Then Lemma~\ref{lamme:max3} indicates that the maximum of $S_{1,2,d}'(\rho)$ given the constraints in~\cref{Eq:cons12dp} is reached when $\lambda_2'=\lambda_d'$. In other words, the maximum of $S_{1,2,d}(\rho)$ given the constrains of~\cref{Eq:cons12d} is saturated with $\lambda_2=\lambda_d$, which contradicts $\lambda_2>\lambda_d$. Therefore, the solution to the maximization problem is given by~\cref{Eq:solutionmaxd}.
\end{proof}
\subsection{Minimization}
Now, we consider the solution to the minimization of~\cref{Eq:maxd}.  Similarly, we first consider the minimization with $d=3$ and $\lambda_1\ge \lambda_2\ge \lambda_3$,
\begin{equation}\label{Eq:min3}
\begin{aligned}
\min S(\rho)&= -\lambda_1\log(\lambda_1)-\lambda_2\log(\lambda_2)-\lambda_3\log(\lambda_3)\\
s.t.~~ &\lambda_1^2+\lambda_2^2+\lambda_3^2 = \mathcal P(\rho)\\
 &\lambda_1+\lambda_2+\lambda_3=1\\
&1\ge \lambda_1\ge \lambda_2\ge \lambda_3\ge 0. \\
 \end{aligned}
\end{equation}
\begin{lemma}\label{lamme:min3}
The solution to the minimization problem in~\cref{Eq:min3} is reached either when $\lambda_1=\lambda_2$ or $\lambda_3=0$.
\end{lemma}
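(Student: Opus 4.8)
The plan is to recycle, essentially verbatim, the first-variation computation from the proof of \cref{lamme:max3}. Eliminating $\lambda_1,\lambda_2$ with the two equality constraints and parametrizing the feasible one-dimensional curve by $\lambda_3$, the identity \cref{Eq:Constraint} gives
\[
\frac{dS}{d\lambda_3}
=\frac{(\lambda_3-\lambda_2)\log\lambda_1+(\lambda_1-\lambda_3)\log\lambda_2+(\lambda_2-\lambda_1)\log\lambda_3}{\lambda_1-\lambda_2}
=(\lambda_2-\lambda_3)\!\left[\frac{\log\lambda_2-\log\lambda_3}{\lambda_2-\lambda_3}-\frac{\log\lambda_1-\log\lambda_2}{\lambda_1-\lambda_2}\right].
\]
Strict concavity of $\log$ makes the bracket nonnegative for $\lambda_1\ge\lambda_2\ge\lambda_3$ (secant slopes of a concave function decrease as the interval moves right), and it vanishes only when $\lambda_1=\lambda_2$ or $\lambda_2=\lambda_3$; hence $S$ is nondecreasing along the curve as a function of $\lambda_3$, and strictly increasing wherever $\lambda_1>\lambda_2>\lambda_3>0$. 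So the minimum of $S$ over the feasible arc is attained at the point of that arc with the smallest value of $\lambda_3$, which necessarily lies on the boundary of the ordered region $\lambda_1\ge\lambda_2\ge\lambda_3\ge0$.

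It then remains to identify which face of that boundary hosts the minimizer. First I would dispose of the degenerate situations $\mathcal P(\rho)=1/3$ (feasible set is the single point $\lambda_1=\lambda_2=\lambda_3=1/3$) and $\mathcal P(\rho)=1$ (feasible set is the point $(1,0,0)$), in both of which the claim is immediate; for $1/3<\mathcal P(\rho)<1$ the constraint curve is a genuine circle. The only face I must rule out is $\{\lambda_2=\lambda_3\}$. Suppose the smallest-$\lambda_3$ endpoint $Q$ of the feasible arc had $\lambda_1>\lambda_2=\lambda_3>0$. Any tangent vector $(\dot\lambda_1,\dot\lambda_2,\dot\lambda_3)$ to the curve satisfies $\dot\lambda_1+\dot\lambda_2+\dot\lambda_3=0$ and $\lambda_1\dot\lambda_1+\lambda_2\dot\lambda_2+\lambda_3\dot\lambda_3=0$, which at $Q$ (using $\lambda_2=\lambda_3$) force $(\lambda_1-\lambda_2)\dot\lambda_1=0$, hence $\dot\lambda_1=0$ and $\dot\lambda_2=-\dot\lambda_3\ne0$. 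Therefore moving off $Q$ along the curve into the feasible region $\lambda_2>\lambda_3$ forces $\dot\lambda_3<0$ (the gap $\lambda_2-\lambda_3$ grows at rate $-2\dot\lambda_3$), i.e.\ it \emph{decreases} $\lambda_3$, while keeping $\lambda_1>\lambda_2$ and $\lambda_3>0$ for a short enough step; this contradicts the minimality of $\lambda_3$ at $Q$. Hence the minimizing point has $\lambda_1=\lambda_2$ or $\lambda_3=0$, which is the assertion.

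The main obstacle is not a hard estimate but the boundary bookkeeping in this last step: one must note that the formula for $dS/d\lambda_3$ is singular exactly on $\{\lambda_1=\lambda_2\}$ — harmless, since that face is part of the desired conclusion — and one must argue that the feasible set is a single arc (the intersection of a planar circle with the convex triangle $\{\lambda_1\ge\lambda_2\ge\lambda_3\ge0,\ \sum_i\lambda_i=1\}$, which is centred at that triangle's own vertex $(1/3,1/3,1/3)$, hence connected) and that $\lambda_3$ restricted to it attains its minimum either at an endpoint on $\{\lambda_3=0\}$ or at the circle's own $\lambda_3$-minimum, which by a short symmetry argument itself lies on $\{\lambda_1=\lambda_2\}$. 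Once \cref{lamme:min3} is in place, the next step — mirroring the way the rescaling argument of \cref{theorem:max} lifted \cref{lamme:max3} to general $d$ — is to turn the two alternatives into an explicit formula, selecting $\lambda_1=\lambda_2$ for $\mathcal P(\rho)\le 1/2$ and $\lambda_3=0$ for $\mathcal P(\rho)\ge 1/2$, and then to iterate it to obtain the staircase minimizer in arbitrary dimension.
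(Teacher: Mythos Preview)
Your proof is correct and shares its opening with the paper's: both recycle the inequality $dS/d\lambda_3\ge 0$ from \cref{lamme:max3} to reduce the minimization of $S$ to the minimization of $\lambda_3$ over the feasible arc. Where you diverge is in locating that minimum. The paper does it purely algebraically: with $\lambda_3$ fixed, the remaining constraints $\lambda_1+\lambda_2=1-\lambda_3$ and $\lambda_1^2+\lambda_2^2=\mathcal P(\rho)-\lambda_3^2$ admit a real solution iff $2(\mathcal P(\rho)-\lambda_3^2)\ge(1-\lambda_3)^2$ (equivalently $(\lambda_1-\lambda_2)^2\ge0$), with equality precisely when $\lambda_1=\lambda_2$; solving this quadratic in $\lambda_3$ gives the explicit lower bound $\lambda_3\ge\max\{0,(1-\sqrt{6\mathcal P(\rho)-2})/3\}$ and immediately identifies which alternative ($\lambda_1=\lambda_2$ or $\lambda_3=0$) applies according to whether $\mathcal P(\rho)\lessgtr 1/2$. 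Your tangent-vector argument ruling out the face $\lambda_2=\lambda_3$ is more geometric and carries the boundary bookkeeping you flag, but is equally valid; the paper's route is shorter, avoids any connectedness or endpoint analysis of the arc, and has the bonus that the case split and the explicit minimizer drop out of the same inequality---exactly what you correctly anticipate as the ``next step''.
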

\begin{proof}
From the proof of Lemma \ref{lamme:max3}, we already showed that $dS(\rho)/d\lambda_3\ge 0$. Therefore, the lower bound of $S(\rho)$ is reached when $\lambda_3$ takes its minimum. As $2(\lambda_1^2+\lambda_2^2)\ge (\lambda_1+\lambda_2)^2$, according to~\cref{Eq:min3}, we have
\begin{equation}\label{Eq:}
\begin{aligned}
2(\mathcal P(\rho)-\lambda_3^2)\ge (1-\lambda_3)^2.
 \end{aligned}
\end{equation}
The lower bound for $\lambda_3$ is
\begin{equation}
\begin{aligned}
\lambda_3\ge \max \left\{0, \frac{1-\sqrt{6\mathcal P(\rho) - 2}}{3}\right\}\nonumber
 \end{aligned}
\end{equation}
Thus, when $\mathcal P(\rho) \ge 1/2$, the minimal possible value for $\lambda_3$ is 0. When $1/3\le \mathcal P(\rho) < 1/2$, the minimal possible value for $\lambda_3$ is $\frac{1-\sqrt{6\mathcal P(\rho) - 2}}{3}$ and $\lambda_1=\lambda_2=(1-\lambda_3)/2$. Note that $\mathcal P(\rho)\ge1/3$ for $d=3$.
\end{proof}

Now, we can show the general solution to the minimization of~\cref{Eq:maxd}.
\begin{theorem}\label{theorem:min}
Suppose $\lambda_1\ge \lambda_2 \ge \dots \lambda_k$, the solution to the minimization problem in~\cref{Eq:maxd} is
\begin{equation}\label{Eq:solutionmind}
\begin{aligned}
\lambda_1=\lambda_2=\dots=\lambda_{k-1}&=\frac{1-\alpha}{k-1},\\
\lambda_{k} &= \alpha,\\
\lambda_{k+1}=\dots=\lambda_d &= 0.
\end{aligned}
\end{equation}
Here,
\begin{equation}
  \alpha=\frac{1}{k} - \sqrt{ (1-1/k)(\mathcal P(\rho)-1/k)}
\end{equation}
and $k$ is the integer such that $ \frac{1}{k} \le  \mathcal P(\rho)\le \frac{1}{k-1}$.
\end{theorem}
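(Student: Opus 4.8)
The plan is to mirror the proof of Theorem~\ref{theorem:max}: reduce the general $d$-dimensional minimization in \cref{Eq:maxd} to the three-variable problem of \cref{Eq:min3} and invoke Lemma~\ref{lamme:min3}. Let $\lambda=(\lambda_1\ge\dots\ge\lambda_d\ge0)$ be any minimizer (one exists by compactness of the feasible set), and let $m$ be the number of strictly positive entries, so $\lambda_1,\dots,\lambda_m>0$ and $\lambda_{m+1}=\dots=\lambda_d=0$. Since $\rho$ has trace one, $\lambda_1>0$.

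First I would establish the structural claim $\lambda_1=\dots=\lambda_{m-1}$. Fix any index $i$ with $2\le i\le m-1$ and freeze every coordinate except $\lambda_1,\lambda_i,\lambda_{i+1}$; then the partial sums $\lambda_1+\lambda_i+\lambda_{i+1}=b$ and $\lambda_1^2+\lambda_i^2+\lambda_{i+1}^2=a$ are fixed, with $b\ge\lambda_1>0$. Rescaling $\lambda'_j=\lambda_j/b$, exactly as in the proof of Theorem~\ref{theorem:max}, casts the restricted problem into the form of \cref{Eq:min3} (the rescaled purity $a/b^2\ge1/3$ holds automatically for three nonnegative numbers summing to one). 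Global optimality of $\lambda$ implies optimality for this restricted problem, so Lemma~\ref{lamme:min3} forces $\lambda_1=\lambda_i$ or $\lambda_{i+1}=0$; the second is impossible because $i+1\le m$. Hence $\lambda_1=\lambda_2=\dots=\lambda_{m-1}=:\tfrac{1-\alpha}{m-1}$ and $\lambda_m=:\alpha\in(0,\lambda_{m-1}]$, which is precisely the ansatz of \cref{Eq:solutionmind} with $m$ in place of $k$. (For $m\le2$ this range of $i$ is empty and the ansatz holds trivially.)

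It then remains to identify $m$ with $k$ and to compute $\alpha$. Substituting the ansatz into $\sum_i\lambda_i=1$ and $\sum_i\lambda_i^2=\mathcal P(\rho)$ yields $\alpha^2+\tfrac{(1-\alpha)^2}{m-1}=\mathcal P(\rho)$, a quadratic in $\alpha$ with roots $\tfrac1m\pm\sqrt{(1-1/m)(\mathcal P(\rho)-1/m)}$; the ordering constraint $\alpha=\lambda_m\le\lambda_{m-1}=\tfrac{1-\alpha}{m-1}$ forces $\alpha\le1/m$, selecting the minus sign and giving the stated $\alpha$. As $\alpha$ ranges over $[0,1/m]$ the quantity $\alpha^2+\tfrac{(1-\alpha)^2}{m-1}$ decreases monotonically from $\tfrac1{m-1}$ to $\tfrac1m$, so the $m$-support ansatz is feasible (i.e.\ $\alpha$ real and $0\le\alpha\le1/m$) exactly when $\tfrac1m\le\mathcal P(\rho)\le\tfrac1{m-1}$; since $\mathcal P(\rho)\ge1/d$ always, the integer $k$ with $\tfrac1k\le\mathcal P(\rho)\le\tfrac1{k-1}$ is well defined with $k\le d$, and matching intervals gives $m=k$.

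The quadratic solve and the monotonicity check are routine. The step that needs the most care is the bookkeeping in the last paragraph: one must check that the feasibility interval $[1/m,1/(m-1)]$ of the $m$-support ansatz coincides with the interval defining $k$, treat the boundary case $\mathcal P(\rho)=1/(m-1)$ (where the $m$- and $(m-1)$-support descriptions give the same state because then $\alpha=0$), and use existence of a minimizer so that the structural conclusion of paragraph two is non-vacuous. Given Lemma~\ref{lamme:min3}, no new inequality is needed: the argument is a clean lifting of the three-variable case, just as Theorem~\ref{theorem:max} lifts Lemma~\ref{lamme:max3}.
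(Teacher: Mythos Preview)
Your proposal is correct and follows essentially the same approach as the paper: reduce to the three-variable case via the rescaling trick from the proof of Theorem~\ref{theorem:max}, invoke Lemma~\ref{lamme:min3} to force the structural form \cref{Eq:solutionPossible}, and then pin down $k$ from the constraints. The paper picks a generic triple $\lambda_i>\lambda_j\ge\lambda_k>0$ rather than your specific $(\lambda_1,\lambda_i,\lambda_{i+1})$, and determines $k$ via the chain $k\mathcal P(\rho)\ge1\ge(k-1)\mathcal P(\rho)$ instead of your quadratic-feasibility argument, but these are cosmetic differences; your treatment is slightly more careful (compactness, the boundary case $\alpha=0$) but not materially different.
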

\begin{proof}
Suppose we always have the solution in the form as
\begin{equation}\label{Eq:solutionPossible}
\begin{aligned}
  \lambda_1=\lambda_2=\dots=\lambda_{k-1}, \lambda_{k}, \lambda_{k+1}=\dots \lambda_d = 0.
\end{aligned}
\end{equation}
Otherwise, there must exist three $\lambda_i,\lambda_j,\lambda_k$ such that $\lambda_i> \lambda_j\ge \lambda_k$ and $\lambda_k\ne 0$. Following a similar argument in the proof of Theorem \ref{theorem:max}, we can show that this contradicts  Lemma \ref{lamme:min3}.

According to~\cref{Eq:solutionPossible}, we have
\begin{equation}\label{Eq:}
\begin{aligned}
  (k-1)\lambda_1^2 + \lambda_k^2 &= \mathcal P(\rho),\nonumber\\
  (k-1)\lambda_1+\lambda_{k}&=1,\\
  k&\le d
\end{aligned}
\end{equation}
We can show that the possible integer value for $k$ is unique. That is,
\begin{equation}\label{Eq:}
\begin{aligned}
 k[(k-1)\lambda_1^2 + \lambda_k^2] &\ge [  (k-1)\lambda_1+\lambda_{k}]^2\nonumber\\
 &\ge (k-1)[(k-1)\lambda_1^2 + \lambda_k^2]
 \end{aligned}
\end{equation}
Equivalently, we have
\begin{equation}\label{Eq:}
\begin{aligned}
 k\mathcal P(\rho) \ge 1  \ge (k-1)\mathcal P(\rho),\nonumber
 \end{aligned}
\end{equation}
hence
\begin{equation}\label{Eq:}
\begin{aligned}
\frac{1}{\mathcal P(\rho)}\le & \,\,k\,\,\le  \frac{1}{\mathcal P(\rho)} + 1,\nonumber\\
 \frac{1}{k} \le  & \,\, \mathcal P(\rho) \,\,\le \frac{1}{k-1}.
 \end{aligned}
\end{equation}
\end{proof}
\subsection{Upper and lower bounds to coherence and entanglement}
We now call $\{\lambda_{i,\rho}^{\text{M}}\}, \{\lambda_{i,\rho}^{\text{m}}\}$ the vectors solving the maximization and the minimization, respectively. Given a bipartite state $\rho_{AB}$, by minimizing (maximizing) the marginal purity on $B$ subsystem and maximizing (minimizing) the global purity, one has \\
\emph{Result 1~(Eq.~(2) of the main text)}--- Given a quantum state $\rho_{AB}\in \mathcal{H}_{d_A}\otimes \mathcal{H}_{d_B}$, and defining $\rho_B = \mathrm{Tr}_A\rho_{AB}$, its coherent information $I(A\rangle B)$ is bounded as follows: 
  \begin{eqnarray}  \label{thebound2}
 l_e(\rho_{AB})\leq I(A\rangle B)\leq u_e(\rho_{AB}),
 \end{eqnarray} 
{\footnotesize
 \begin{eqnarray} 
 &&l_e(\rho_{AB})= \nonumber\\
&-&(1-\lambda^{\text{m}}_{k_{\rho_B},\rho_B}) \log \lambda^{\text{m}}_{1,\rho_B} 
 -\lambda^{\text{m}}_{k_{\rho_B},\rho_B} \log  \lambda^{\text{m}}_{k_{\rho_B},\rho_B}\nonumber\\
  &+& (1-\lambda^{\text{M}}_{1,\rho_{AB}}) \log \frac{(1-\lambda^{\text{M}}_{1,\rho_{AB}})}{(d-1)} + \lambda^{\text{M}}_{1,\rho_{AB}} \log {\lambda^{\text{M}}_{1,\rho_{AB}}},\nonumber\\
&&u_c(\rho_{AB})=\nonumber\\
&& (1-\lambda^{\text{m}}_{k_{\rho_{AB}},\rho_{AB}}) \log \lambda^{\text{m}}_{1,\rho_{AB}} +\lambda^{\text{m}}_{k_{\rho_{AB}},\rho_{AB}} \log  \lambda^{\text{m}}_{k_{\rho_{AB}},\rho_{AB}}\nonumber\\
 &-&  (1-\lambda^{\text{M}}_{1,\rho_B})\log \frac{(1-\lambda^{\text{M}}_{1,\rho_B})}{(d_B-1)} - \lambda^{\text{M}}_{1,\rho_B} \log \lambda^{\text{M}}_{1,\rho_B}\nonumber.
\end{eqnarray}
}

By minimizing (maximizing) the coherence of the dephased state $\rho_d=\sum_i\ket{i}\!\!\bra{i}\rho\ket{i}\!\!\bra{i}$, and maximizing (minimizing) the coherence of the state under study, we obtain lower (upper) bounds to the relative entropy of coherence:\\
\emph{Result 2~(Eq.~(5) of the main text)} --- The relative entropy of coherence $C_{\mathrm{RE}}(\rho)$ is  bounded as follows:
\begin{eqnarray}\label{Eq:Estimationlb2}
l_c(\rho)\leq C_{\mathrm{RE}}(\rho)\leq u_c(\rho), 
\end{eqnarray}
{\footnotesize
\begin{eqnarray}
 &&l_c(\rho)= -(1-\lambda^{\text{m}}_{k_{\rho_d},\rho_d}) \log \lambda^{\text{m}}_{1,\rho_d}-\lambda^{\text{m}}_{k_{\rho_d},\rho_d} \log \lambda^{\text{m}}_{k_{\rho_d},\rho_d}\nonumber\\
  &+& (1-\lambda^{\text{M}}_{1,\rho})\log\frac{(1-\lambda^{\text{M}}_{1,\rho})}{(d-1)} + \lambda^{\text{M}}_{1,\rho} \log{\lambda^{\text{M}}_{1,\rho}},\nonumber\\
 &&u_c(\rho)=  (1-\lambda^{\text{m}}_{k_{\rho},\rho}) \log \lambda^{\text{m}}_{1,\rho} +\lambda^{\text{m}}_{k_{\rho},\rho} \log \lambda^{\text{m}}_{k_{\rho},\rho}\nonumber\\
 &-&  (1-\lambda^{\text{M}}_{1,\rho_d}) \log \frac{(1-\lambda^{\text{M}}_{1,\rho_d})}{(d-1)} - \lambda^{\text{M}}_{1,\rho_d} \log \lambda^{\text{M}}_{1,\rho_d}\nonumber.
\end{eqnarray}
}

\subsection{Tightness of the bounds}
\begin{figure*}[t!]
\centering
\includegraphics[width=0.7\linewidth]{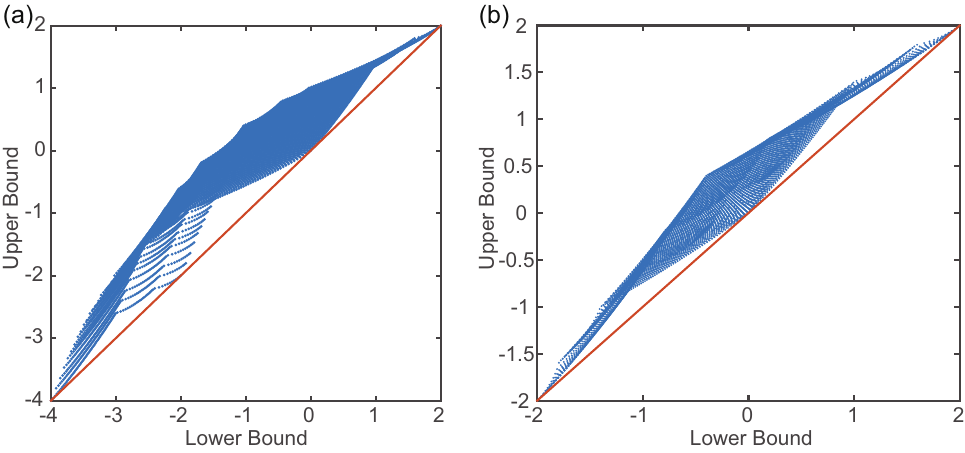}
\caption{Numerical simulations of lower and upper bounds of (a) coherent information and (b) relative entropy of coherence.}\label{fig:simulation}
\end{figure*}
To investigate the tightness of our bounds, we simulate the lower and upper bounds of  coherent information on four-qubit state $\rho_{AB}$ with $|A|=|B|=2$, and the results are shown in Supplementary Fig.~\ref{fig:simulation}(a). Each point displayed corresponds to the bounds at some pair of purity values $\mathcal{P}(\rho_{AB})$ and $\mathcal{P}(\rho_{B})$. The results are shown in Supplementary Fig.~\ref{fig:simulation}(a). Also, we simulate the lower and upper bounds of relative entropy of coherence on four-dimensional state~($d=4$), and the results are shown in Supplementary Fig.~\ref{fig:simulation}(b). Each point displayed corresponds to the bounds at some pair of purity values $\mathcal{P}(\rho)$ and $\mathcal{P}(\rho_{d})$. We can see that overall the upper and lower bounds give a good estimation of the the coherent information and the relative entropy of coherence. These bounds are tight when their values are close to the minimal~(maximally mixed states) and maximal~(maximally entangled states and maximally coherent states).

We use $\epsilon$ to characterize the ``tightness" of lower and upper bounds for \emph{pure states $\rho$}. For coherent information, the bounds are tight~($u_e(\rho)=l_e(\rho)$) for pure states~($\mathcal{P}(\rho_{AB}=1)$) with $\mathcal{P}(\rho_B)=\frac{1}{d_B}$, where $B$ is the subsystem of $\rho=\rho_{AB}$ and $d_B$ is the dimension of system $B$. The difference $\epsilon_e=\mathcal{P}(\rho_B)-1/d_B$ is related to the tightness of $u_e(\rho)$ and $l_e(\rho)$. For example, we consider a 4-qubit GHZ state $\ket{\text{GHZ}_4}=\frac{1}{\sqrt{2}}(\ket{0000}+\ket{1111})_{1234}$ with bi-partition of $A=\{1, 2, 3\}$ and $B=\{4\}$. The purity of subsystem $B$ is $\mathcal{P}(\rho_B)=\frac{1}{2}$ indicating the maximal entanglement of $\ket{\text{GHZ}_4}$.

For relative entropy of coherence, the bounds are tight~($u_c(\rho)=l_c(\rho)$) for pure state~($\mathcal{P}(\rho)=1$) with diagonal matrix of $\rho_d=\frac{1}{d}\mathbb I_d$~($\mathcal{P}(\rho_d)=\frac{1}{d}$). The difference $\epsilon_c=\mathcal{P}(\rho_d)-1/d$ is related to the tightness of $u_c(\rho)$ and $l_c(\rho)$. For example, we consider a single-qubit state $\ket{+}=\frac{1}{\sqrt{2}}(\ket{0}+\ket{1})$, the purity of its diagonal matrix is $\mathcal{P}(\rho_d)=\frac{1}{2}$ indicating $\ket{+}$ is a maximally coherent state. 

In theory, $\epsilon_e$ and $\epsilon_c$ provide mathematical criteria for the tightness of our method. $\epsilon_e(\epsilon_c)\to0$ indicates the maximally entangled state~(maximally coherent state), which is of particular interest in quantum information science.
\section*{Supplementary Note 2: Details of experimental realizations and results}
Before discussing the experimental details, we provide an overview of the crucial optical components and their respective functions employed in our experiment:
\begin{itemize}
    \item Waveplates: We utilize both half-wave plates (HWPs) and quarter-wave plates (QWPs) to perform unitary operations ($U$) on photons. The angle between the fast axis of the waveplate and the vertical polarization direction is denoted as $\omega$ for HWPs and $\vartheta$ for QWPs. These waveplates induce specific unitary transformations on the quantum state of photons, enabling the desired experimental operations. The unitary transformations of HWPs and QWPs can be expressed as:
\begin{align}
    U_{\text{HWP}}(\omega) &= -\begin{pmatrix}
 \cos 2\omega &  \sin 2\omega \\
 \sin 2\omega &  -\cos 2\omega \\
\end{pmatrix}, \\
U_{\text{QWP}}(\vartheta) &= \frac{1}{\sqrt{2}}\begin{pmatrix}
 1+i\cos 2\vartheta &  i\sin 2\vartheta \\
 i\sin 2\vartheta &  1-i\cos 2\vartheta \\
\end{pmatrix}
\end{align}

\item Polarization beam splitter~(PBS): This component separates photons with perpendicular polarization directions. It transmits horizontally polarized photons while reflecting vertically polarized photons.

\item Beam displacer~(BD): The BD allows vertically polarized photons to pass through in the original direction, while horizontally polarized photons are displaced. In our experiment, we use a BD with a length of 28.3 mm, resulting in a displacement offset of 3~mm.
\end{itemize}

\subsection*{A. Preparation of $\ket{\text{GHZ}_\theta}$ and $\ket{\psi_{2, \theta}}$}
To generate $\ket{\text{GHZ}_\theta}$, photon 1 and 2 pass through two BDs after calibration of their polarization as shown in~\cref{Fig:state}(a). Calling $h(v)$ the deviated (transmitted) spatial modes, the step-by-step description of optical elements to generate $\ket{\text{GHZ}_\theta}$ is 
\begin{equation}\label{GHZ}
\begin{aligned}
&\cos\theta\ket{HH}_{12}+\sin\theta\ket{VV}_{12}\\
\xrightarrow{\rm{BD1}}&\cos\theta\ket{HhH}_{11^\prime2}+\sin\theta\ket{VvV}_{11^\prime2}\\
\xrightarrow{\rm{BD2}}&\cos\theta\ket{HhHh}_{11^\prime22^\prime}+\sin\theta\ket{VvVv}_{11^\prime22^\prime}.
\end{aligned}
\end{equation}
\begin{figure}[h]
\centering
\includegraphics[width=0.5\linewidth]{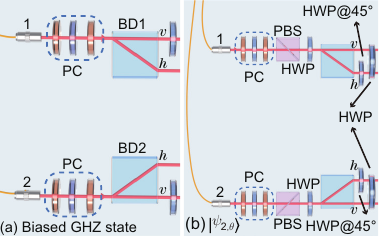}
\caption{(a) The setup to generate biased four-qubit GHZ states $\ket{\text{GHZ}_\theta}$. (b) The setup to generate two-copy states $\ket{\psi_{2, \theta}}$. PC: polarization control is a combination of QWP, HWP and QWP which can be set to calibrate the variation of polarization caused by the twist of fibre.}\label{Fig:state}
\end{figure}
In the experiment, by changing the value of $\theta$, we totally generate eleven $\rho_{\text{GHZ}_{\theta}}$. For each $\rho_{\text{GHZ}_{\theta}}$, we reconstruct the corresponding density matrix $\rho_{\text{GHZ}_{\theta}}$ using standard quantum state tomography (SQT). We calculate the fidelity $F=\tr(\rho_{\text{GHZ}_{\theta}}\ket{\text{GHZ}_{\theta}}\bra{\text{GHZ}_{\theta}})$. The results are summarized in~Supplementary Table~\ref{Tab:fidelity}.
\begin{table}[h]
\centering
\caption{The fidelities of the prepared $\rho_{\text{GHZ}_{\theta}}$.}\label{Tab:fidelity}
\begin{tabular}{|c|c|}
\hline
$\theta$ & $F$\\
\hline
0 & $0.9293\pm0.0004$\\
\hline
$\pi/20$ & $0.9050\pm0.0006$\\
\hline
$2\pi/20$ & $0.9057\pm0.0007$\\
\hline
$3\pi/20$ &  $0.9331\pm0.0005$ \\
\hline
$4\pi/20$ & $0.9136\pm0.0006$\\
\hline
$5\pi/20$ &  $0.9124\pm0.0007$\\
\hline
$6\pi/20$ & $0.9155 \pm0.0003$ \\
\hline
$7\pi/20$ & $0.8850 \pm0.0007$ \\
\hline
$8\pi/20$ & $0.8833 \pm0.0009$\\
\hline
$9\pi/20$ & $0.9155 \pm0.0005$\\
\hline
$10\pi/20$ & $0.9290 \pm0.0004$\\
\hline
\end{tabular}
\end{table}

The setup to prepare $\ket{\psi_{2, \theta}}\otimes\ket{\psi_{2, \theta}}$ is illustrated in~Supplementary Fig.~\ref{Fig:state}(b). Photon 1 and 2 pass through two PBSs creating the resulting state of $\ket{HH}_{12}$. The step-by-step description of optical elements to generate $\ket{\psi_{2, \theta}}\otimes\ket{\psi_{2, \theta}}$ is 
    \begin{equation}\label{Eq:two copy1}
     \begin{aligned}
      \ket{HH}_{12}&\xrightarrow[\text{are applied on photon 1 and 2}]{\text{Two HWPs}@\theta/2}(\cos\theta\ket{H}_1+\sin\theta\ket{V}_1)\otimes(\cos\theta\ket{H}_2+\sin\theta\ket{V}_2)\\
& \xrightarrow[\text{are applied on photon 1 and 2}]{\rm{BD1}\&\rm{BD2}}(\cos\theta\ket{Hh}_{11^\prime}+\sin\theta\ket{Vv}_{11^\prime})\otimes(\cos\theta\ket{Hh}_{22^\prime}+\sin\theta\ket{Vv}_{22^\prime})\\
& \xrightarrow[\text{are applied on path}\,\, h\,\,\text{of photon 1 and 2}]{\text{Two HWPs}@45\degree}\ket{VV}_{12}\otimes[(\cos\theta\ket{h}_{1^\prime}+\sin\theta\ket{v}_{1^\prime})\otimes(\cos\theta\ket{h}_{2^\prime}+\sin\theta\ket{v}_{2^\prime})]=\ket{VV}_{12}\otimes\ket{\psi_{2,\theta}}_{1^\prime2^\prime}\\
& \xrightarrow[\text{are applied on both path of photon 1 and 2}]{\text{Two HWPs}@\theta/2}\ket{\psi_{2,\theta}}_{12}\otimes\ket{\psi_{2,\theta}}_{1^\prime2^\prime}
\end{aligned}
\end{equation}
As shown in~\cref{Eq:two copy1}, the first $\ket{\psi_{2, \theta}}$ is encoded into the polarization degree of freedom (DOF) of both photons, while the second $\ket{\psi_{2, \theta}}$ is encoded on the path DOF of both photons.

\subsection*{B. Experimental setups to perform shadow estimation and collective measurements}
\begin{figure}[h]
\centering
\includegraphics[width=0.5\linewidth]{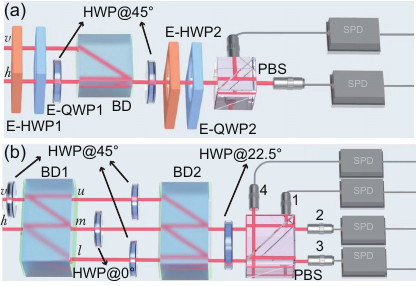}
\caption{Illustration of setups to perform (a) shadow estimation and (b) collective measurements.}\label{Fig:measurement}
\end{figure}

In shadow estimation, the single-qubit Clifford unitary operation $U_n\in \text{Cl}_2$ followed by a projective measurement on Pauli-Z basis is equivalent to measuring qubit on basis $J_n=U_n^{\dagger}ZU_n\in\left\{\pm X,\pm Y,\pm Z \right\}$. The experimental setup to perform measurement on arbitrary basis $(\alpha \ket{H}+\delta \ket{V})\otimes(\gamma\ket{h}+\mu\ket{v})$ is shown in~Supplementary Fig.~\ref{Fig:measurement}(a), where $\alpha \ket{H}+\delta \ket{V}$ is encoded on polarization DOF and $\gamma\ket{h}+\mu\ket{v}$ is encoded on path DOF. Projection on $\alpha \ket{H}+\delta \ket{V}$ is realized by combination of E-HWP1~(set at angle of $\theta_1$) and E-QWP1~(set at angle of $\upsilon_1$), while the projection of $\gamma\ket{h}+\mu\ket{v}$ and its orthogonal basis is realized by combination of E-HWP2~(set at angle of $\theta_2$), E-QWP2~(set at angle of $\upsilon_2$) and a PBS. A step-by-step description of implementation of such projective measurement is described by
\begin{equation}
\begin{aligned}
(\alpha\ket{H}+\beta \ket{V})\otimes(\gamma\ket{h}+\delta\ket{v})
& \xrightarrow[\text{on both path}]{\text{E-HWP1}@\omega_1\&\text{E-QWP1}@\vartheta_1} \ket{H}\otimes(\gamma\ket{h}+\delta\ket{v}) \\
& \xrightarrow[\rm{on\,\,path\,\,}h]{\text{HWP}@45^\circ} \gamma\ket{Vh}+\delta\ket{Hv} \\
& \xrightarrow[\text{combine the path}\,\,h\,\,\text{and}\,\,v]{\rm{BD3}} \gamma\ket{V}+\delta\ket{H} \\
& \xrightarrow{\rm{HWP@45^\circ}} \gamma\ket{H}+\delta\ket{V} \\
& \xrightarrow[]{\text{E-HWP2}@\omega_2\&\text{E-QWP2}@\vartheta_2} \ket{H} \\ 
& \xrightarrow{\rm{PBS}}
\left\{  
\begin{array}{rcl}  
\text{transmitted}, &\text{results of}\,\,(\alpha\ket{H}+\beta \ket{V})\otimes(\gamma\ket{h}+\delta\ket{v}) \\
\text{reflected},  &\text{results of}\,\,(\alpha\ket{H}+\beta \ket{V})\otimes(\gamma\ket{h}+\delta\ket{v})^\perp 
\end{array},  
\right.
\end{aligned}
\end{equation}
where $(\gamma\ket{h}+\delta\ket{v})^\perp$ is the orthogonal state of $\gamma\ket{h}+\delta\ket{v}$. 

The experimental setup to implement BSM between polarization-encoded qubit and path-encoded qubit is shown in~\cref{Fig:measurement}(b). Note that the projection on four Bell states is demonstrated with single experimental setting. A detailed description of~\cref{Fig:measurement}(b) is 
\begin{equation}
\begin{split}
&    \begin{cases}
    &\ket{\Psi^{+}} = \frac{1}{\sqrt{2}}(\ket{Hv}+\ket{Vh})\\
    &\ket{\Psi^{-}} = \frac{1}{\sqrt{2}}(\ket{Hv}-\ket{Vh})\\
    &\ket{\Phi^{+}} = \frac{1}{\sqrt{2}}(\ket{Hh}+\ket{Vv})\\
    &\ket{\Phi^{-}} = \frac{1}{\sqrt{2}}(\ket{Hh}-\ket{Vv})\\
\end{cases}
\xrightarrow[{\text{on path}\,\,v}]{\text{HWP}@45^\circ} 
\begin{cases}
    &\frac{1}{\sqrt{2}}(\ket{Vv}+\ket{Vh})\\
    &\frac{1}{\sqrt{2}}(\ket{Vv}-\ket{Vh})\\
    &\frac{1}{\sqrt{2}}(\ket{Hh}+\ket{Hv})\\
    &\frac{1}{\sqrt{2}}(\ket{Hh}-\ket{Hv})\\
\end{cases}
\xrightarrow[\text{Re-encoding}]{\text{BD1}} 
\begin{cases}
    &\frac{1}{\sqrt{2}}(\ket{Vu}+\ket{Vm})\\
    &\frac{1}{\sqrt{2}}(\ket{Vu}-\ket{Vm})\\
    &\frac{1}{\sqrt{2}}(\ket{Hl}+\ket{Hm})\\
    &\frac{1}{\sqrt{2}}(\ket{Hl}-\ket{Hm})\\
\end{cases}\\
&\xrightarrow[\text{HWP}@0^\circ\text{on path}\,\,m]{\text{HWPs}@45^\circ\text{on path}\,\,u\,\,\text{and}\,\,l} 
\begin{cases}
    &\frac{1}{\sqrt{2}}(\ket{Hu}-\ket{Vm})\\
    &\frac{1}{\sqrt{2}}(\ket{Hu}+\ket{Vm})\\
    &\frac{1}{\sqrt{2}}(\ket{Vl}+\ket{Hm})\\
    &\frac{1}{\sqrt{2}}(\ket{Vl}-\ket{Hm})\\
\end{cases}
\xrightarrow[\text{combine path}]{\text{BD2}} 
\begin{cases}
    &\ket{-}\ket{m}\\
    &\ket{+}\ket{m}\\
    &\ket{+}\ket{l}\\
    &\ket{-}\ket{l}\\
\end{cases}
\xrightarrow[\text{on path}\,\,m\,\,\text{and}\,\,l]{\text{HWP}@22.5^\circ} 
\begin{cases}
    &\ket{V}\ket{m}\\
    &\ket{H}\ket{m}\\
    &\ket{H}\ket{l}\\
    &\ket{V}\ket{l}\\
\end{cases}
\xrightarrow[]{\text{PBS}} 
\begin{cases}
    &\text{clicks on detector 1}\\
    &\text{clicks on detector 2}\\
    &\text{clicks on detector 3}\\
    &\text{clicks on detector 4}\\
\end{cases}.
\end{split}
\end{equation}

\subsection*{C. Robustness of coherent information and relative entropy of coherence}
\begin{figure*}[t]
    \centering
    \includegraphics{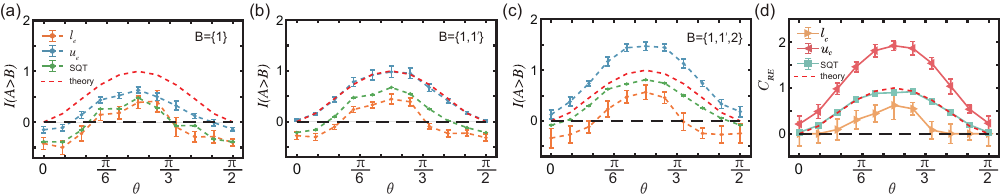}
    \caption{Experimental results of lower and upper bounds of $I(A\rangle B)$ and $C_{\mathrm{RE}}$ for the prepared $\rho_{\text{GHZ}_{\theta}}$. (a)-(c), the lower bound $l_e$~(orange dots) and upper bound $u_e$~(blue dots) with $B=\{1\}, \{1, 1^\prime\}$ and $\{1, 1^\prime, 2\}$ respectively. The green dots represent $I(A\rangle B)$ calculated with reconstructed $\rho_{\text{GHZ}_{\theta}}$, and the dashed red lines represents the theoretical predictions of ideal $\ket{\text{GHZ}_\theta}$. (d), the the lower bound $l_c$~(orange dots) and upper bound $u_c$~(red dots) of $C_\text{RE}(\rho_{\psi_{2,\theta}})$. The cyan dots represent $C_\text{RE}(\rho_{\psi_{2,\theta}})$ calculated with reconstructed $\rho_{\text{GHZ}_{\theta}}$ and the dashed red line represents theoretical prediction of ideal $\ket{\text{GHZ}_\theta}$.   }
    \label{fig:ideal}
\end{figure*}
In our experiment, the fidelity of prepared states are 0.9\% with respect to the target state, so that the results with standard state tomography~(QST) are treated as ``true values". This is because the coherent information and relative entropy of coherence are sensitive to noise. To clarify this, we present the theoretical predictions of $I(A\rangle B)$ with $B=\{1\}, \{1, 1^\prime\}$ and $\{1, 1^\prime, 2\}$~(red dashed lines) in Supplementary Fig.~\ref{fig:ideal}(a)-(c) and that of $C_\text{RE}$ in Supplementary  Fig.~\ref{fig:ideal}(d). From Supplementary Fig.~\ref{fig:ideal}, we observe that 
\begin{itemize}
\item[1.] There are always gaps between the theoretical predictions and the results from SQT for $I(A\rangle B)$, and the smaller the $|B|$ is the larger the gap is. 
\item[2.] The theoretical predictions of $C_\text{RE}$ agrees well with that from SQT, except the state $\rho_{\text{GHZ}_\theta}$ with $\theta=\pi/4$. 
\end{itemize}
The discrepancy between theoretical predictions and experimental results are mainly caused by experimental noise, which we will discuss separately in the following.

For the coherent information $I(A\rangle B)$ and $C_\text{RE}$, we calculate the values of $I(A\rangle B)$ on four-qubit state 
\begin{equation}\label{Eq:noisyGHZ1}
\rho_\text{noise}=(1-p)\ket{\text{GHZ}_{\pi/4}}\bra{\text{GHZ}_{\pi/4}}+p\frac{\mathbb I}{16}. 
\end{equation}
The results of $I(A\rangle B)$ with $p\in[0, 0.1]$ are shown in Supplementary Fig.~\ref{fig:simnoise}(a), where $I(A\rangle B)$ decreases faster with small $|B|$. In our experiment, the average fidelity of prepared $\ket{\text{GHZ}_\theta}$ is about 0.9 as listed in Supplementary Table~\ref{Tab:fidelity}, which introduces different discrepancies of $I(A\rangle B)$ with $|B|=1, 2$ and 3 as shown in Supplementary Fig.~\ref{fig:ideal}(a)-(c).
\begin{figure}[h!]
    \centering
    \includegraphics[width=\linewidth]{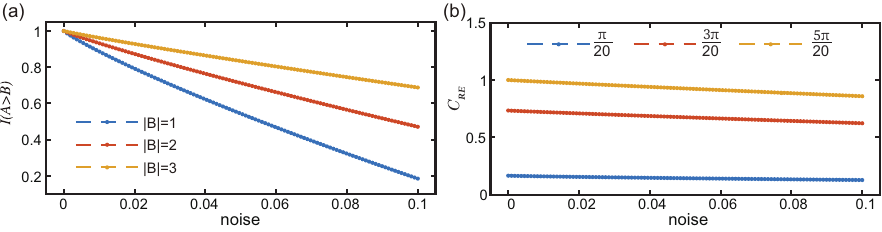}
    \caption{Simulated results of $I(A\rangle B)$ and $C_{\text{RE}}$ of noisy states. (a), $I(A\rangle B)$ of noisy state in Eq.~\ref{Eq:noisyGHZ1} with $|B|=1, 2$ and 3. (b), $C_{\text{RE}}$ of noisy state in Eq.~\ref{Eq:noisyGHZ2} with $\theta=\pi/20, 3\pi/20$ and $5\pi/20$.}
    \label{fig:simnoise}
\end{figure}
For the relative entropy of coherence $C_\text{RE}$, we calculate the value of $C_\text{RE}$ on four-qubit state  
\begin{equation}\label{Eq:noisyGHZ2}
\rho_\text{noise}=(1-p)\ket{\text{GHZ}_{\theta}}\bra{\text{GHZ}_{\theta}}+p\frac{\mathbb I}{16}. 
\end{equation}
with $\theta=\pi/20, 3\pi/20$ and $5\pi/20$ respectively. As shown in Supplementary Fig.~\ref{fig:simnoise}(b), $C_\text{RE}$ decreases faster with larger $\theta$ as $p$ increases, which explains the biggest discrepancy at $\theta=\pi/4$ in Supplementary  Fig.~\ref{fig:ideal}(d).

\end{document}